\def\ps@headings{%
\def\@oddhead{\mbox{}\scriptsize\rightmark \hfil \thepage}%
\def\@evenhead{\scriptsize\thepage \hfil \leftmark\mbox{}}%
\def\@oddfoot{}%
\def\@evenfoot{}}
\def\ie{\textit{i.e.}\xspace}
\def\etc{\textit{etc.}\xspace}
\def\eg{\textit{e.g.}\xspace}
\newtheorem{theorem}{Theorem}
\newtheorem{lemma}[theorem]{Lemma}
\begin{document}
\bibliographystyle{plain}

\title{Truthful Auction Mechanism for Heterogeneous Spectrum Allocation in Wireless Networks}

\author{\IEEEauthorblockN{He Huang\IEEEauthorrefmark{1}, Yu-e Sun\IEEEauthorrefmark{2}, Xiang-Yang Li\IEEEauthorrefmark{3}, Hongli Xu\IEEEauthorrefmark{4}, Yousong Zhou\IEEEauthorrefmark{4} and Liusheng Huang\IEEEauthorrefmark{4}}
\IEEEauthorblockA{\IEEEauthorrefmark{1}School of Computer Science and Technology, Soochow University, Suzhou, China\\
\IEEEauthorrefmark{2}School of Urban Rail Transportation, Soochow University, Suzhou, China\\
\IEEEauthorrefmark{3}Department of Computer Science, Illinois
Institute of Technology, and TNLIST, Tsinghua University.\\
\IEEEauthorrefmark{4}Department of Computer Science and Technology, University of Science and Technology of China, Hefei, China\\
Emails: \{huangh, sunye12\}@suda.edu.cn, xli@cs.iit.edu, lshuang@ustc.edu.cn}}

\maketitle

\begin{abstract}
Secondary spectrum auction is widely applied in wireless networks for
mitigating the spectrum scarcity. In a realistic spectrum trading
market, the requests from secondary users often specify the usage of a
fixed spectrum frequency band in a certain geographical region and
require a duration time in a fixed available time
interval. Considering the selfish behaviors of secondary users, it is
imperative to design a truthful auction which matches the available spectrums
and requests of secondary users optimally. Unfortunately, existing
designs either do not consider spectrum heterogeneity or ignore the
differences of required time among secondary users.

In this paper, we address this problem by investigating how to use
auction mechanisms to allocate and price spectrum resources so that
the social efficiency can be maximized. We begin by classifying the
spectrums and requests from secondary users into different local
markets which ensures there is no interference between local markets,
and then we can focus on the auction in a single local market. We
first design an optimal auction based on the Vickrey-Clarke-Groves (VCG)
mechanism to maximize the social efficiency while enforcing
truthfulness. To reduce the computational complexity, we further
propose a truthful sub-optimal auction with polynomial time complexity,
which yields an approximation factor $6+4 \sqrt{2}$.
Our extensive simulation results using real spectrum availability data
show that the social efficiency ratio of the sub-optimal auction is
always above 70\% compared with the optimal auction.

\end{abstract}
\begin{IEEEkeywords}
spectrum auction, heterogeneous spectrum allocation, truthful
\end{IEEEkeywords}

\IEEEpeerreviewmaketitle

\section{Introduction}\label{sec1}
With the increasing popularity of wireless devices
and applications, the ever-increasing demand of traffic poses a great
challenge in spectrum allocation and usage. However, current fixed
\emph{long-term} and \emph{regional lease} spectrum allocation scheme
leads to significant spectrum \emph{white spaces} and artificial
shortage of spectrum resources. Many efforts such as the Federal
Communications Commission (FCC) ruling on white spaces are attempting
to free the under-utilized licensed spectrum by permitting
opportunistic access \cite{jones2008evaluation}. However, the
incumbents have no incentive to permit their spectrum to be shared
\cite{kash2011enabling}. In order to utilize such idle spectrum
resources, one promising technology is to encourage secondary users
sublease spectrum from primary users (who own the right to use
spectrum exclusively) \cite{xu2010salsa}.

Auction serves as such an effective way that helps increase the
efficiency and effectiveness of the spectrum, in which the primary
user could gain utilities by leasing their idle spectrums in economic
perspective while new applicants could gain access to these spectrums
\cite{ji2008multi},\cite{kasbekar2010spectrum}. Previous studies on
spectrum auctions (\eg \cite{huangsprite},\cite{zhou2008ebay},\cite{zhou2009trust}) mainly consider wireless interference and spatial
reuse of channels under economic robustness constraints. Most of the
existing works assume that secondary users can share one channel only
if they are spatial-conflict free with each other. However, under a
more realistic model, a secondary user may only be interested in
the usage of one channel during some specific time periods. Therefore,
secondary users can share the same channel in spatial, temporal, and
spectral domain without causing interference with each other.
So, it is reasonable to further
improve the spectrum utilization by introducing time-domain.

Following in this direction, some papers \cite{wang2010toda},\cite{xu2010salsa},\cite{xu2011tofu} and \cite{chen2009mining} take the requested time durations of secondary users into consideration. However, all these works only consider a special case where all the secondary users request some \emph{fixed} continuous time intervals. In fact, the request time of a secondary user is not always fixed. For example, some people may request the usage of one channel for 2 hours in an available time interval which last from 2:00PM to 5:00PM, instead of requesting a fixed time interval lasts from 2:00PM to 4:00PM. Therefore, the case of a secondary user requests a duration time for the usage of one channel in some available time windows is more general than the study of previous works. On the other side, spectrum provided by the primary users is often \emph{heterogeneous} in a realistic spectrum market. For instance, spectrums may reside in various frequency bands, and the communication quality changes greatly when frequency band varies (\emph{frequency heterogeneity})\cite{fengtahes}. Meanwhile, spectrum is a local
resource and is available only within the license region (\emph{market locality}) \cite{wangdistrict}.
Spectrum heterogeneity is investigated in
\cite{fengtahes},\cite{kash2011enabling},\cite{parzy2011non}. Nevertheless, none of the existing works has addressed the
spectrum heterogeneity and secondary users' time demand at the same time.



In this paper, we propose a framework in which  secondary users can
request the usage of one channel with specific \emph{frequency band
  type} in a specific area and during some specific time periods. The
time slots allocated to a fixed request should  be supplied by one
channel, and may be discretely distributed in the specific time
interval that secondary user asked. A natural goal of spectrum auction is to maximize the social efficiency, \ie, allocating spectrum to the secondary users who value it most.
Therefore, our aim in this work is to design auction
mechanisms which maximize \emph{social efficiency} while ensuring truthful
bidding from secondary users. This model is similar to the weighted
time scheduling problem for the multi-machine version
\cite{bar2002approximating}. However, the studies on time scheduling
problems are not concerned with the truthfulness of jobs, which is one of
the most critical properties in spectrum auction.
Furthermore, in \cite{bar2002approximating} a fixed job can
be allocated into different machines in time scheduling problem. In
our model, a fixed request can only be allocated in one
channel. Therefore, the studies on time schedule problem cannot be
directly used in auctions for spectrum allocation. To the best of our
knowledge, we are the first to design truthful auction mechanisms for
spectrum allocation in this model.

The main contributions of this paper are as follows. This paper
studies the case where each request of secondary user contains a
specific area and an interested type of spectrum. Moreover, channels
supplied by the primary user also include sub-region and spectrum type
information. We divide the spectrum market into several
non-interference local markets according to the area and spectrum
type. Our new schemes focus on the auction in a single local
market. Assuming that the \emph{conflicting model} of secondary users
in a specific channel is a complete graph, we first design an optimal
auction based on the Vickrey-Clarke-Groves (VCG) mechanism to enforce
truthfulness and maximization of the social efficiency. As everyone knows, the
winner determining problem with VCG mechanism is a NP-hard
problem. Therefore, we further propose a sub-optimal auction with a
\emph{greedy-like} winner determination mechanism and a \emph{critical
  value} based payment rule, which together induce truthful
bidding. We will show that our sub-optimal auction has a polynomial time
complexity and yields a constant approximation factor at most $6+4 \sqrt{2}$.
The low time complexity makes this auction much more practical for large scale
spectrum market.
We then conduct extensive simulation studies on the performance
 of our mechanisms using real spectrum availability data.
Our simulation results show that the performance of our
sub-optimal mechanism is efficient in social efficiency compared with
the optimal VCG method.
The social efficiency achieved by our suboptimal method is actually more than $70\%$
 of the optimal.


The rest of paper is organized as follows: Section \ref{sec2}
introduces preliminaries and our design targets. Sections \ref{sec3}
and \ref{sec4} propose our algorithm design for optimal and suboptimal mechanisms.
Section \ref{sec5} evaluates the performance of our mechanisms. Section \ref{sec6} reviews
related work and Section \ref{sec7} concludes the paper.

\section{Preliminaries}\label{sec2}

\subsection{Heterogeneous Spectrum Auction Model}\label{sec2A}

Consider a spectrum setting where one auctioneer (Primary User) contributes $m$ distinct channels to $n$ secondary users located in a geographic region $\mathcal{L}$. The auctioneer holds the usage right of $m$ spectrums $\mathcal{S}=\{s_1 , s_2 , ...,s_m \}$  and is willing to sublease the usage of these channels to secondary users for time intervals. The auction system consists of $n$ secondary users $\mathcal{B}=\{b_1 , b_2 ,..., b_n \} $ who want to pursue the right of using some channels for some period of time. Based on the inherent characteristic of spectrum market locality (\emph{spatial heterogeneity}), we make the assumption that the entire licensed region $\mathcal{L}$  can be partitioned into several disjoint sub-regions. Each disjoint geographical sub-region $i$ is denoted by $l_i$. Fig. \ref{fig:partition} shows a sub-region partition instance. In our model, each channel is only available in the sub-regions which it can be used at the same time. Therefore, auctions happened in different sub-regions do not influence each other.

\begin{figure}[!t]
\centering
\includegraphics[width=2in,height=1.2in]{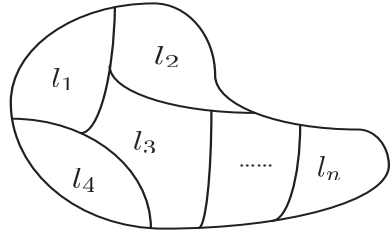}
\caption{The  licensed region $\mathcal{L}$ is partitioned into several disjoint sub-regions.}
\label{fig:partition}
\end{figure}

As we know, spectrum request from secondary user often specify a particular frequency band they needed in a practical spectrum market. For example, some wireless users only request the spectrums residing in lower-frequency bands due to the limitation of wireless devices. Therefore, we will also take \emph{frequency heterogeneity} into consideration in our auction model. Since that the candidate spectrums can be grouped into different spectrum type sets $\mathcal{ST}$ based on their frequency bands, we can partition the whole secondary spectrum market into several local markets.
All the spectrums in local market $M_k$ can be used in the same sub-region and with the same spectrum type. Since auctions in different local markets have no mutual effects with each other, we can just focus on the auction in a single local market $M_k$.

\subsection{Spectrum Bidding Model}\label{sec2B}

Assume a secondary user $b_i \in \mathcal{B}$  has a set of specific spectrum requests $\mathcal{J}_i$ . Each spectrum request $I_j \in \mathcal{J}_i$ can be defined as a \textbf{job}. A secondary user $b_i$ can bid for several distinct jobs in multiple local markets, but only one job at most in a specific local market. Each $I_j \in \mathcal{J}_i $  can be described as  $I_j = (l_j , ST_j , v_j , a_{j} , d_{j}, t_j)$, where $l_j$ shows the preferred spectrum release region, $ST_j$ explains the interested spectrum type, and $v_j$ denotes the bidding price for the usage right of specific channel.
$a_j$ , $d_j$ and $t_j$ respectively denote each job's arrival time, deadline and duration (or job length). Note that the allocation time slots for each job can only stem from a single spectrum, and the request time interval is not necessarily continuous.

Each candidate channel  $s_i \in \mathcal{S} $ provided by the primary user can be characterized by a triple $(l_i , ST_i , A_i) $, where  $l_i$ denotes the sub-region where $s_i$  accommodates,  $ST_i$ describes the spectrum type of $s_i$ , and $A_i$ includes all the available time slots in $s_i$. Primary users will set $\eta_s$ as the per-unit reservation price of each spectrum. 


\subsection{Economic Requirements and Design Target}\label{sec2C}

In this paper, we will study the complete conflict graph model for secondary users in each $s_i$, and leave the general conflict graph model as a future work. The objective of this work is to design a heterogeneous spectrum auction mechanism satisfying the necessary economic property of truthfulness (a.k.a \emph{Strategyproofness}), which maximizes the social efficiency at the same time.

We use $\mathbf{f}$ to denote the vector of all bids in an auction, and use $\mathbf{f}_{-j}$ to denote the set of bids for all jobs except job $I_j$. Each job $j$ is charged a payment $p_j$ if $I_j$ wins in the auction. Thus, the utility for job $I_j$ can be stated as:

\begin{equation*}
u_j=\left\{ \begin{array}{ll}
v_j-p_j & \textrm{if $I_j$ wins in the auction}\\
0 & \textrm{otherwise}
\end{array} \right.
\tag{1}
\end{equation*}

An auction is deemed as \emph{truthful} if revealing the true valuation is the dominant strategy for each job $b_j$, regardless of other jobs' bids. More formally, an auction is truthful or strategy-proof, if for any job $I_j$ , and for all $f_j \ne v_j$ the following inequality always holds:

\begin{small}
\begin{equation*}
u_j (v_j , \mathbf{f}_{-j}) \geq u_j(f_j , \mathbf{f}_{-j})
\tag{2}
\end{equation*}
\end{small}
In other words, that an auction is truthful implies that no player can improve its own profit (utility) by bidding untruthfully. In our problem, truthfulness requires that:
\begin{compactenum}
\item The secondary users report their true valuations for the usage of spectrum channels (\emph{called \textbf{value-SP}}).

\item The secondary users report their true required time durations (\emph{called \textbf{time-SP}}).
\end{compactenum}

It has been proved in \cite{nisan2007algorithmic}, a bid monotonic auction is truthful and individually rational if it always charges \emph{critical values} from secondary users. The monotone allocation implies there is a \emph{critical value} for each job such that if $I_j$ bids higher than \emph{critical value} then it wins and if $I_j$ bids lower than \emph{critical value} then $I_j$ loses.

In this paper, we target at designing a heterogeneous spectrum auction which guarantees to achieve maximization of social efficiency under truthfulness constraint. Social efficiency is introduced to evaluate the performance of the proposed mechanism. The social efficiency for an auction mechanism $\mathcal{M}$ is defined as total true valuations of all winners, $i.e.$ $\text{EFF}(\mathcal{M})= \sum_{I_j \in \mathcal{I}} v_j y_j$, where $v_j$ denotes the true valuation of job $I_j$ and $y_j$ indicates whether the required channel is allocated to $I_j$. Hence, we concerned the following optimization problem:


\setcounter{equation}{2}
\parbox{5cm}{\begin{eqnarray*} \max \sum_{I_j \in \mathcal{I}} && v_j y_j; \nonumber\\
s.t. & & \emph{Economic Robust Constraints} \end{eqnarray*}} \hfill
\parbox{1cm}{\begin{eqnarray} \end{eqnarray}}

\section{VCG-based Optimal Auction Mechanism Design}\label{sec3}

In this section, we present an optimal auction which maximizes the expected social efficiency while enforcing truthfulness.

\subsection{VCG-based Optimal Auction Model}

Recalling our assumptions in the spectrum auction model section, auctions in different local markets have no mutual effects with each other, thus we just focus on the auction in a single local market $M_{k}$ in this section. Let $\mathcal{I}=\{I_1 ,...,I_N \}$ denote the job set in local market $M_k$, and $\mathcal{S}=\{s_1 ,...,s_M \}$ be the available spectrum (channel) set in $M_k$. Our aim is to achieve the maximization of the social efficiency through an optimal matching between sets $\mathcal{I}$ and $\mathcal{S}$.

Assume $A_i =\{x_{1,i} ,...,x_{q,i} \}$ includes all the available time slots in $s_i $. Each $I_j \in \mathcal{I}$ can only be allocated in the time slot of $s_i $ between $a_j $ and $d_j $. In order to simplify the matching model between $I_j $ and $s_i $, we will make a further segmentation to $A_i $ based on the arrival time and deadline of all the jobs in $\mathcal{I}$.
For each $I_j \in \mathcal{I}$, its arrival time/deadline divides one of the time slot in $s_i$ into 2 time slots. As shown in Fig. \ref{fig:segmentation}, the time axis of $s_i $ is divided into many disjoint time slots after our segmentation. Let $x_{l,i} $ be the \textit{l-th} time slot in $s_i $ and $\Delta _{l,i}$ be the length of $x_{l,i} $. We define $\Delta _{l,i} =0$ when time slot $x_{l,i} $ is occupied by the primary user. Assume that the time slot beginning at $a_j $ is the $_{ }n_s^{i,j} $-th time slot in $s_i $ and the time slot ending at $d_j $ is the $n_e^{i,j} $-th time slot in $s_i $. Formally, $y_{i,j} \in \{0,1\}$ is a binary variable indicating whether $I_j$ is allocated in $s_i$. We can formulate the spectrum assignment problem for jobs into an IP (\emph{Integral Programming}):

{\setlength{\abovedisplayskip}{1pt}
\setlength{\belowdisplayskip}{1pt}
\begin{small}
\begin{equation*}
\setlength{\abovedisplayskip}{1pt}
\setlength{\belowdisplayskip}{1pt}
\max  O(v)=\sum\limits_{I_j \in \mathcal{I}} {\sum\limits_{s_i \in \mathcal{S}} {v_j y_{i,j} } }
\tag{4}
\end{equation*}
\end{small}
subject to
\begin{small}
\begin{equation*}
\begin{cases}
y_{i,j} \in \{0,1\}, \forall i, \forall j  \\
\sum\limits_{s_i\in \mathcal{S}} {y_{i,j} } \le 1, \forall j \\
v_j \ge \eta_s t_j, \forall j \\
x_{l,i}^j \ge 0, \forall l, \forall i, \forall j\\
\sum_{l=n_s^{i,j} }^{n_e^{i,j} } {x_{l,i}^j } \ge t_j y_{i,j} ,\ \mbox{}\forall i, \forall j\\
\sum\limits_{I_j\in \mathcal{I}} {x_{l,i}^j \le \Delta _{l,i} }, \forall i, \forall l
\end{cases}
\end{equation*}
\end{small}}

where $x_{l,i}^j $ is the time $x_{l,i} $ allocated to $I_j $, $O(v)$ denotes the objective function of the IP.

\begin{figure}[!t]
\centering
\includegraphics[width=8cm]{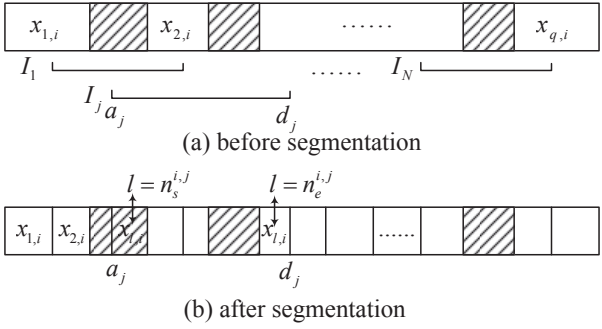}
\caption{An instance of the spectrum $s_i$'s time axis segmentation. Let shadow slots denote the time intervals occupied by primary user, and the blank slots indicate the intervals which can be allocated to secondary users.}
\label{fig:segmentation}
\end{figure}

\subsection{VCG-based Optimal Auction Design}

We first introduce a VCG-based optimal auction, which solves the objective function of \textbf{IP} (4) optimally. The winner determination is to maximize the social efficiency and the payment for each job is the opportunity cost that its presence introduces to all the other jobs. The detailed auction process is given in Algorithm \ref{alg:Algorithm 1}.\\

\vspace{-0.4cm}
\begin{algorithm}
\caption{VCG-based Optimal Auction Mechanism}\label{alg:Algorithm 1}
\begin{algorithmic}[1]

\STATE Let $\mathbf{X}^* = (x_1, x_2, ...,x_N)$ be the winner determining vector, where  $x_j =1$ means job  $I_j$ wins in the auction, while $ x_j=0 $ indicates no allocation for $I_j$;
\STATE Let $\mathbf{P}^* = (p_1 , p_2,..., p_N) $, where $p_j$  is the money that $I_j$  intends to pay the primary user;
\STATE Use VCG-based mechanism to get $\mathbf{X}^* $  and $\mathbf{P}^*$;
\\  \begin{enumerate}
	\item VCG-based mechanism includes an allocation to maximize the social efficiency.
\[
\mathop {\max }\limits_{\mbox{X}^\ast } \sum\limits_{I_j \in \mathcal{I}} {\sum\limits_{s_i \in \mathcal{S}} {v_j y_{i,j} } }
\]
\begin{center}
s.t. Allocation Constraints.
\end{center}
    \item Payment charges each job described as following:
\[
p_{j}^{\ast} =\mathop{\max }\limits_{\mbox{X}^\ast_{-j}} \sum\limits_{k\ne j}{\sum\limits_{s_i \in \mathcal{S}} {v_k y_{i,k}}} - \mathop{\max}\limits_{\mbox{X}^\ast} \sum\limits_{k\ne j} {\sum\limits_{s_i \in \mathcal{S}} {v_k y_{i,k}}}
\]
\[
p_j = \mathop{\max }(p_{j}^{\ast}, \eta_s t_j)
\]
	\end{enumerate}
\STATE The final allocation $\mathbf{X} $  is set to $\mathbf{X}^* $  and the payment $p_j$ is set to $p_{j}^{*} $  for each winner and $p_j =0$  for each loser.
\end{algorithmic}
\end{algorithm}

Solving the above IP optimally is an NP-hard problem, and its computational complexity is prohibitive for large scale spectrum market. Therefore, the VCG-based optimal auction mechanism is only suitable for a relative small-scale spectrum trading market. In Section \ref{sec4}, we will further design an auction that is truthful, but computes only an approximately optimal solution to maximize the social efficiency.

\subsection{Theoretical Analysis}

As mentioned before, economic robust constraint should be satisfied in auction design. We now analyze the properties of the optimal auction in terms of \emph{value-SP} and \emph{time-SP}. Recall that a bid monotonic auction is value-SP if it always
charges critical values from secondary users. In our VCG-based optimal auction, the payment for each job $I_j$ is decided by a fixed $\mbox{X}^\ast_{-j}$ or its requiring time. So we can say that $p_j$ is independent of $v_j$, which means it is a critical value.
Therefore, the requirement of critical value is satisfied immediately.

\begin{lemma}
\label{lemma:critical-value}
If $I_j$ wins, its payment $p_j$ is a critical value.
\end{lemma}

\begin{lemma}
\label{lemma:monotone}
The VCG-based optimal auction is bid monotone. That is, for each $I_j \in \mathcal{I}$, if $I_j $ wins by bidding price $f_j $, then it also wins by bidding any price ${f}'_j \geq f_j $.
\end{lemma}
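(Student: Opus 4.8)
The plan is to exploit the \emph{single-parameter} structure of the allocation problem: the only way job $I_j$'s reported bid enters \textbf{IP} (4) is through the single coefficient of $y_{i,j}$ in the objective, whereas the entire feasible region (the assignment constraint $\sum_i y_{i,j}\le 1$, the time-demand constraints involving $n_s^{i,j}, n_e^{i,j}, t_j$, and the capacity constraints $\sum_j x_{l,i}^j \le \Delta_{l,i}$) depends only on the time parameters $a_j,d_j,t_j$ and on the channel availabilities, never on $v_j$. So the first step is to record this observation: the set of feasible allocations is identical whether $I_j$ bids $f_j$ or $f'_j$, and only the objective value attached to them changes.

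Next I would decompose the optimal welfare as a function of $I_j$'s bid $b$. Writing $V_{-j}(A)=\sum_{k\ne j}\sum_{s_i} v_k y_{i,k}$ for the welfare a feasible allocation $A$ draws from the \emph{other} jobs, define two bid-independent quantities: $R=\max\{V_{-j}(A) : A\text{ feasible},\ I_j\notin A\}$, the best welfare achievable while excluding $I_j$, and $Q=\max\{V_{-j}(A) : A\text{ feasible},\ I_j\in A\}$, the best companion welfare achievable while including $I_j$. Then the optimal value of \textbf{IP} (4) when $I_j$ bids $b$ is exactly $W(b)=\max\{R,\ Q+b\}$, and the mechanism allocates to $I_j$ precisely when the ``include'' branch is selected, i.e.\ when $Q+b\ge R$. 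The winning condition $b\ge R-Q$ is a half-line in $b$, so if it holds at $b=f_j$ it holds at every $f'_j\ge f_j$; this is exactly bid monotonicity, and the threshold $R-Q$ is the critical value of Lemma \ref{lemma:critical-value}.

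Equivalently, if a direct exchange argument is preferred, suppose toward contradiction that $I_j$ wins at $f_j$ via an optimal allocation $A$ but loses at some $f'_j>f_j$ via an optimal allocation $A'$ with $I_j\notin A'$. Optimality of $A'$ at $f'_j$ gives $V_{-j}(A')\ge V_{-j}(A)+f'_j$ (since $A$ still includes $I_j$ and remains feasible), while optimality of $A$ at $f_j$ gives $V_{-j}(A)+f_j\ge V_{-j}(A')$ (since $A'$ excludes $I_j$, its objective is unchanged between the two bids); chaining these forces $f'_j\le f_j$, a contradiction. The one genuine subtlety — the main thing to pin down — is \emph{tie-breaking} when $Q+b=R$: to make ``wins'' well defined and the threshold sharp, the mechanism must resolve ties by a fixed rule (for instance, always allocating to $I_j$ on equality), and I would state this convention explicitly so that the boundary cases $f'_j=f_j$ and, more generally, $f'_j\ge f_j$ with $Q+f_j=R$ are all covered cleanly.
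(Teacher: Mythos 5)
Your proposal is correct and follows essentially the same route as the paper's proof: the paper likewise argues that raising $I_j$'s bid can only increase the value of the allocation containing $I_j$ while leaving allocations that exclude $I_j$ unchanged, so that allocation remains optimal and $I_j$ still wins. Your write-up is in fact more careful than the paper's one-paragraph argument --- the explicit comparison $V_{-j}(A)+f_j \ge V_{-j}(A')$ against allocations excluding $I_j$, the bid-independent threshold $R-Q$ tying monotonicity to the critical value, and the tie-breaking convention at $Q+b=R$ are all left implicit in the paper.
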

\begin{proof}
Suppose $O^\ast (v)$ is the optimal solution of objective function (4) and $\mbox{X}^\ast$ is the winner determining vector corresponding to $O^\ast (v)$ when $I_j$ bids $f_j $. If $I_j $ wins in the auction, bidding higher can only increase the value of $O^\ast (v)$. Hence, $O^\ast (v)$ is also the optimal solution for objective function (4), $I_j$ always wins if it bids ${f}'_j \ge f_j $, the Lemma holds.
\end{proof}

According to Lemma \ref{lemma:critical-value} and Lemma \ref{lemma:monotone}, it follows that:
\begin{theorem}
\label{theo:vcg}
The VCG-based optimal auction is value-SP for secondary users.
\end{theorem}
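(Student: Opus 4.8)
The plan is to derive Theorem~\ref{theo:vcg} as an immediate consequence of the two preceding lemmas together with the characterization of truthful single-parameter mechanisms cited from \cite{nisan2007algorithmic}. That characterization states that a monotone allocation rule paired with a payment rule charging each winner its critical value is truthful (and individually rational). Since Lemma~\ref{lemma:monotone} establishes bid monotonicity and Lemma~\ref{lemma:critical-value} establishes that the payment $p_j$ of any winning job equals its critical value, both hypotheses of the characterization are satisfied, and \emph{value-SP} follows directly. A one-line proof would simply invoke these two lemmas and the cited result.

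To make the argument self-contained rather than a bare citation, I would spell out the underlying threshold argument. Fix a job $I_j$ and all competing bids $\mathbf{f}_{-j}$. By monotonicity there is a threshold $c_j$, determined solely by $\mathbf{f}_{-j}$ (and by $I_j$'s requested duration through the reserve term $\eta_s t_j$), such that $I_j$ wins precisely when $f_j \ge c_j$; moreover, by Lemma~\ref{lemma:critical-value} the charge upon winning is exactly $c_j = \max(p_j^\ast, \eta_s t_j)$ and is independent of $I_j$'s own bid. I would then split on the true value $v_j$.

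If $v_j \ge c_j$, bidding truthfully wins and yields utility $v_j - c_j \ge 0$; any alternative winning bid pays the same $c_j$ (identical utility) and any losing bid yields $0 \le v_j - c_j$, so $f_j = v_j$ is optimal. If $v_j < c_j$, truthful bidding loses with utility $0$, whereas any bid that wins must pay $c_j > v_j$ for a negative utility $v_j - c_j < 0$; hence bidding $v_j$ is again optimal. In both cases inequality~(2) holds, which is precisely \emph{value-SP}.

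The only point needing care --- and where Lemmas~\ref{lemma:critical-value} and~\ref{lemma:monotone} do the real work --- is the claim that the winning threshold and the payment coincide and are both independent of $v_j$. Monotonicity guarantees a well-defined threshold, while Lemma~\ref{lemma:critical-value} pins the payment to that threshold even after the reserve price is folded in through the $\max(p_j^\ast, \eta_s t_j)$ operation. Once these facts are in hand, the case analysis is routine. I expect no substantive obstacle, since the heavy lifting (monotonicity of the optimal IP solution and the externality form of the VCG payment) has already been discharged in the lemmas.
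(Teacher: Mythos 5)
Your proposal is correct and follows exactly the paper's own route: the paper proves Theorem~\ref{theo:vcg} in one line by combining Lemma~\ref{lemma:critical-value} (critical-value payments) and Lemma~\ref{lemma:monotone} (bid monotonicity) with the characterization cited from \cite{nisan2007algorithmic}. Your explicit threshold case analysis merely unpacks that cited characterization, so it adds self-containedness but is not a different argument.
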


Since the time-domain is introduced into spectrum auction mechanism design, time-SP issue should also be considered at the same time. Now, we will show that the proposed VCG-based optimal auction is time-SP for each secondary user.
\begin{theorem}
\label{theo:vcg-time}
The VCG-based optimal auction is time-SP for secondary users.
\end{theorem}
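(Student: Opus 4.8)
The plan is to fix an arbitrary job $I_j$ together with the reports $\mathbf{f}_{-j}$ of all other jobs and its true valuation $v_j$, and then compare the utility $I_j$ obtains by declaring its true duration $t_j$ against the utility it gets from any misreport $t_j'$. The crucial modeling observation is that $I_j$ realizes its value $v_j$ only if it is actually allocated at least its true requirement $t_j$ of time slots on a single channel inside $[a_j,d_j]$; otherwise its realized value is $0$. The declared duration enters the IP (4) only through the reservation constraint $v_j \ge \eta_s t_j$ and the coverage constraint $\sum_{l=n_s^{i,j}}^{n_e^{i,j}} x_{l,i}^j \ge t_j y_{i,j}$. I would therefore split the analysis into over-reporting and under-reporting.

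For over-reporting ($t_j' > t_j$), the key is a monotonicity of the optimum in the declared duration. Since any assignment covering $\ge t_j'$ slots also covers $\ge t_j$ slots, the feasible region of (4) for demand $t_j'$ is contained in that for demand $t_j$; hence the optimal social efficiency with $I_j$ forced to win, which I denote $O^*_{+j}(t)$, is non-increasing in $t$, while the value $V_{-j}$ of the best allocation that excludes $I_j$ does not depend on $I_j$'s report at all. Two consequences follow. First, winning is monotone in the declared time: if $O^*_{+j}(t_j') \ge V_{-j}$ then $O^*_{+j}(t_j) \ge O^*_{+j}(t_j') \ge V_{-j}$, so whenever $I_j$ wins with $t_j'$ it also wins with $t_j$. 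Second, writing the VCG charge of a winner as $p_j^* = V_{-j} - \bigl(O^*_{+j}(t) - v_j\bigr)$, the monotonicity of $O^*_{+j}$ gives that $p_j^*$ is non-decreasing in $t$, and the floor $\eta_s t$ is clearly increasing; hence $p_j = \max(p_j^*, \eta_s t)$ satisfies $p_j(t_j) \le p_j(t_j')$. Combining these: if $I_j$ wins under $t_j'$ it also wins under $t_j$ with no larger payment and the same value $v_j$, so truth is weakly better; and if $I_j$ loses under $t_j'$ its deviation utility is $0$, which by individual rationality cannot beat truthful bidding.

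For under-reporting ($t_j' < t_j$) I would argue the deviation can never deliver the value $v_j$. The coverage constraint of the reported instance only forces $\sum_l x_{l,i}^j \ge t_j'$, and because the objective $O(v)$ rewards only the indicator $y_{i,j}$ and never the amount of time handed to a job, the winner determination can (and, to conserve capacity for others, will) assign $I_j$ exactly $t_j'$ units. Thus, even when $I_j$ is nominally allocated, it receives strictly fewer than its true requirement $t_j$, cannot complete its task, and realizes value $0$ while still being charged $p_j \ge 0$; the resulting utility is $\le 0$ and therefore no better than the (non-negative) utility of truthful reporting. Putting the two directions together shows that declaring $t_j$ weakly dominates every alternative $t_j'$, which is exactly time-SP.

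The main obstacle I anticipate is making the over-reporting step airtight: one must express the VCG payment through the constrained optimum $O^*_{+j}(t)$ and the report-independent quantity $V_{-j}$, justify that shrinking the feasible set makes $O^*_{+j}(t)$ non-increasing, and fix a tie-breaking rule in the winner determination so that the implication ``wins with $t_j'$ implies wins with $t_j$'' holds exactly rather than only generically.
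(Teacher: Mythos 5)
Your proof is correct, and on the only case the paper actually treats (over-reporting) it follows the same skeleton: the paper's proof asserts that (i) if $I_j$ wins with $t_j' \geq t_j$ it also wins with $t_j$, and (ii) the payment is ``time-independent or increases with $t_j$,'' and concludes from these that over-reporting cannot help. The difference is that the paper merely asserts (i) and (ii), whereas you prove them: introducing $O^*_{+j}(t)$ and the report-independent $V_{-j}$, noting that a larger declared duration only shrinks the feasible region of IP (4) so that $O^*_{+j}(t)$ is non-increasing, and writing the VCG charge as $p_j^* = V_{-j} - \bigl(O^*_{+j}(t)-v_j\bigr)$ so that payment monotonicity (including the floor $\eta_s t$) falls out. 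You also go further on scope: the paper's proof opens with ``we assume each job could only claim a longer job length than its actual requirement,'' i.e., it excludes under-reporting by fiat, while you derive that case from the explicit modeling assumption that $I_j$ realizes $v_j$ only if its true requirement $t_j$ is covered, so that a job allocated $t_j' < t_j$ slots earns value $0$ against a nonnegative charge. One caveat in that half: your statement that the winner determination ``will'' hand $I_j$ exactly $t_j'$ units is a tie-breaking convention rather than a consequence of IP (4), which is indifferent to surplus slots; if an optimal solution happened to give $I_j$ at least $t_j$ slots anyway, under-reporting would realize $v_j$ at a weakly lower payment. So your under-reporting argument should state that convention (only the declared duration is delivered) as part of the mechanism --- still strictly more than the paper provides, since the paper simply forbids that deviation.
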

\begin{proof}
We assume each job $I_j \in \mathcal{I}$ could only claim a longer job length $t_j $ than its actual requirement. Since the bidding price is the same, if $I_j $ wins by bidding ${t}'_j \geq t_j $, it always wins by bidding truthfully. The payment of $I_j $ is time-independent or increases with $t_j$, so the utility of $I_j $ will not increase after it lies. However, $I_j $ may lose by bidding ${t}'_j \ge t_j $, while wins by bidding truthfully. In this case, the utility of $I_j $ will decrease after it lies.
This finished the proof.
\end{proof}

Based on Theorem \ref{theo:vcg} and Theorem \ref{theo:vcg-time},
we have proved that the designed VCG-based optimal auction is truthful for secondary users.

Since that our allocation model is discrete, in what follows, we would like to investigate whether our solution can be simulated by any continuous model. Let the optimal performances under discrete and continuous allocation models be denoted as $P_1$ and $P_2$ respectively. In fact, we have the following conclusion:

\begin{theorem}
\label{theo:performance}
$\frac{P_1}{P_2} \rightarrow \infty$.
\end{theorem}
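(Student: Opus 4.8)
The plan is to read Theorem~\ref{theo:performance} as a \emph{separation} result and to prove it by exhibiting an explicit family of instances on which the discrete optimum dwarfs the continuous optimum. First I would fix the two models precisely, since the excerpt states them only informally. In the discrete model a winning job $I_j$ may assemble its required length $t_j$ from \emph{arbitrarily many} disjoint free slots inside its window $[a_j,d_j]$ --- this is exactly what the variables $x_{l,i}^j$ in IP~(4) permit. In the natural continuous model each admitted job must instead be served by a single uninterrupted time interval of length $t_j$. The immediate observation is that any continuous allocation is also a feasible discrete allocation, so $P_1 \ge P_2$ always holds; the whole content of the theorem is that the ratio is nonetheless unbounded.

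Next I would build the witnessing family, indexed by $M \in \mathbb{N}$, on a single channel $s_1$ in one local market. I place one ``good'' free interval of length $1$, followed by $M$ ``bad'' free intervals each of length $1-\tfrac1M$, and I separate consecutive free intervals by primary-occupied slots of vanishing length (so $\Delta_{l,1}=0$ on them). I then introduce $M$ identical jobs, each with value $v_j=1$, duration $t_j=1$, and window $[a_j,d_j]$ equal to the entire time horizon, so that every job may draw from every free interval; the reservation price is set to $\eta_s\le 1$ so that $v_j\ge\eta_s t_j$ holds.

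Then the two quantities are read off directly. The total free length is $1+M\bigl(1-\tfrac1M\bigr)=M$, which exactly equals the total demand of the $M$ unit-length jobs; as the windows are universal and slots may be split arbitrarily, the resulting equal-supply/equal-demand transportation problem is feasible (give the good interval to one job and slice the $M-1$ units of bad-interval capacity among the other $M-1$ jobs), so all jobs win and $P_1=M$. In the continuous model an admitted job needs one contiguous free interval of length at least $1$, and the only such interval is the good one, because each bad interval has length $1-\tfrac1M<1$ and the zero-length occupied separators forbid a contiguous interval from crossing into an adjacent gap. Hence at most one job is served and $P_2=1$, giving $P_1/P_2=M$, so $P_1/P_2\to\infty$ as $M\to\infty$.

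The main obstacle is conceptual rather than computational: I must pin down what the ``continuous model'' is allowed to mean and make the separation robust against it. The construction must simultaneously guarantee that the bad intervals are individually too short to host any single contiguous job (so the continuous optimum cannot exploit them) yet collectively supply enough fragmented capacity for the discrete optimum to admit everyone, and that the vanishing separators prevent a continuous job from straddling two intervals. Keeping the single good interval --- rather than none --- is what makes $P_2=1>0$ and yields a clean diverging sequence instead of a trivially undefined ratio. Verifying these two feasibility facts, namely the total-capacity-meets-total-demand packing for the discrete side and the length barrier for contiguous placement on the continuous side, is the only real step; everything else is arithmetic.
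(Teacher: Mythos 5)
Your proposal is correct, and it takes a genuinely different route from the paper's own argument. The paper proves the theorem with a single, minimal instance: one channel, one job $I_j$, and two free slots satisfying $\Delta_{l,i},\Delta_{l+1,i}<t_j$ but $\Delta_{l,i}+\Delta_{l+1,i}\geq t_j$. The job is schedulable in the discrete model (so $P_1=v_j$) but has no contiguous placement, so the continuous optimum is literally zero ($P_2=0$), and the paper reads $P_1/P_2\rightarrow\infty$ off that degenerate ratio. Your construction instead builds a family of instances indexed by $M$, keeping one ``good'' interval so that $P_2=1>0$ while the discrete transportation argument admits all $M$ unit jobs, giving finite ratios $P_1/P_2=M$ that genuinely diverge. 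What the paper's approach buys is brevity --- two slots and one job, no packing argument needed. What yours buys is rigor and strength: the limit statement is realized by an honest divergent sequence rather than a division by zero, and it shows the separation is unbounded even on instances where the continuous model achieves something positive, which is the more meaningful reading of ``the discrete model cannot be simulated within any bounded factor.'' Both feasibility checks in your construction hold: the fractional slicing of the $M-1$ units of bad-interval capacity among $M-1$ jobs is exactly what the variables $x_{l,i}^j$ in IP~(4) permit (the per-slot constraint $\sum_{I_j}x_{l,i}^j\le\Delta_{l,i}$ allows time-sharing of a channel by non-overlapping jobs), and the length barrier $1-\tfrac{1}{M}<1$ together with occupied separators blocks any contiguous placement outside the good interval.
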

\begin{proof}
Assume there exists only one spectrum (channel) $s_i$ and a single job $I_j$ in the scenario. Consider the case as shown in Fig. \ref{fig:comparison}, if $\Delta_{l,i}, \Delta_{l+1,i}< t_j$, $\Delta_{l,i}+ \Delta_{l+1,i}\geq t_j$, $I_j$ cannot be allocated in $s_i$ with continuous time slots. Therefore, $P_2 =0$ holds. However, $I_j$ can be accepted in the discrete allocation model, thus we get $P_1 = v_j$. So $\frac{P_1}{P_2} \rightarrow \infty$, and the theorem holds.
\end{proof}

\begin{figure}[!t]
\centering
\includegraphics[width=2in]{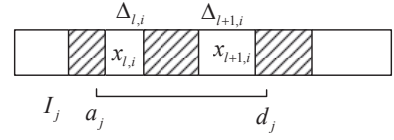}
\caption{An instance of the spectrum market in which exists only one spectrum $s_i$ and a single job $I_j$.}
\label{fig:comparison}
\end{figure}

\section{Suboptimal Auction Mechanism Design}\label{sec4}

In practice, achieving the social efficiency problem optimally is infeasible in large scale spectrum market. Therefore, we present a more computationally efficient Per-Value Greedy (PVG) auction mechanism in this section. The PVG auction first allocates the jobs for approximately maximizing social efficiency with a greedy allocation mechanism, and then charges the critical value of jobs to ensure truthfulness.

\subsection{Spectrum Allocation Mechanism}

We now outline the greedy allocation mechanism. Recall that $v_j$ is the weight (bid) of job $I_j $ and $t_j$ is the job length of $I_j$. The per-unit weight (bid) of job $I_j $ can be calculated through $\eta _j=\frac{v_j }{t_j}$. All feasible jobs in $\mathcal{I}$ are sorted in descending order according to $\eta _j $. The Algorithm \ref{alg:2} maintains a set $\mathcal{A}$ of currently accepted jobs. There are three possible cases that job $I_j $ can be accepted by Algorithm \ref{alg:2}.

Case 1: When a new job $I_j $ is considered according to the sorted order, we scan all the available spectrums one by one, $I_j $ is immediately accepted if it can be allocated in one of these spectrums without time overlapping with any other jobs in $\mathcal{A}$.

Case 2: If $I_j $ overlaps with some jobs in $\mathcal{A}$, it can also be accepted when its weight is larger than $\beta (\beta \geq 1)$ times the sum of weights of all overlapping jobs. In this case, we add $I_j $ in $\mathcal{A}$ and delete all the overlapping jobs in $\mathcal{A}$, and say that $I_j $ ``\textbf{preempts}'' these overlapping jobs.

Case 3: After some other jobs accepted in Case 2, job $I_j$ which has been rejected or preempted before can be reconsidered for acceptance if it can be allocated without overlapping with any other jobs.

Define $J_k $ as the job with the \emph{k}-th smallest $\eta _k$ which is allocated in the time slots of $s_i$ between $a_j$ and $d_j$. We say that jobs $J_1 ,...,J_h $ overlap with $I_j$, if $I_j$ can be allocated without time overlap by deleting jobs $J_1 ,...,J_h $ in $\mathcal{A}$, but cannot be allocated by deleting jobs $J_1 ,...,J_{h-1} $ in $\mathcal{A}$.

If $I_j $'s weight is no larger than $\beta (\beta \ge 1)$ times the sum of weights of all overlapping jobs $J_1 ,...,J_h $, we say that jobs $J_1
,...,J_h $ directly ``\textbf{reject}'' $I_j $.

A job $I_j $ ``\textbf{caused}'' the rejection or preemption of another job $J$, if either job $I_j $ directly rejects or preempts job
$J$, or preempts $J$ indirectly. For example, if job $I$ is preempted by job $J$ and job $J$ is preempted by $I_j$, we say that $I_j $ preempts
$I$ indirectly.

If $I_j$ is accepted, we allocate time slots for job $I_j$ starting from its arrival time and searching for a series of available time slots in a backward manner. The approximation factor of our allocation mechanism is stated in the following.\\

\begin{algorithm}
\caption{Spectrum Allocation Algorithm}\label{alg:2}
\begin{algorithmic}[1]

\renewcommand{\algorithmicrequire}{\textbf{Input:}}
\renewcommand{\algorithmicensure}{\textbf{Output:}}

\REQUIRE ~~\\

    $\mathcal{I}=\{I_1,...,I_N\} $ // $\mathcal{I}$: the set of all the jobs in $M_k$ sorted in descending order according to $\eta_j$;\\
    $\mathcal{S}=\{s_1,...,s_M\} $ // $\mathcal{S}$: the set of available spectrum in $M_k$;\\

\ENSURE ~~\\

    The set of accepted jobs in  $\mathcal{A}$;

\STATE $\mathcal{A}=\emptyset$;
\FOR {$j = 1$ to $N$}
\IF {$v_j \ge \eta_s t_j$ }
\FOR {$i=1$ to $M$}
\IF {$I_j$ can be allocated in $s_i$ that not overlap with other jobs}
\STATE $\mathcal{A}:=\mathcal{A}\cup\{I_j\} $;
\STATE \textbf{accept }$I_j$ and allocate it in $s_i$;
\STATE \text{Break}
\ENDIF
\ENDFOR
\IF {$I_j \notin \mathcal{A}$}
\FOR{$i$= \text{1 to $M$}}
\IF {the total weight of jobs $\{J_1,...,J_n\}$ that overlap with $I_j$ is smaller than $w/ \beta $}
\STATE $\mathcal{A}:=\mathcal{A}\cup\{I_j\}/\{J_1,...,J_n\} $;
\STATE \textbf{preempt} $\{J_1,...,J_n\}$ and allocate $I_j$ in $s_i$;
\FOR {$k=1$ to $j$}
\IF {$I_k \notin \mathcal{A}$, $v_k \ge \eta_s t_k $  and can be allocated in $s_i$ that not overlap with other jobs}
\STATE $\mathcal{A}:=\mathcal{A} \cup \{I_k\}$;
\STATE \textbf{accept} $I_k$ and allocate it in $s_i$;
\ENDIF
\ENDFOR
\STATE \text{Break}
\ENDIF
\ENDFOR
\ENDIF
\IF {$I_j \notin \mathcal{A}$}
\STATE \textbf{reject } $I_j$;
\ENDIF
\ENDIF
\ENDFOR
\RETURN $\mathcal{A}$;
\end{algorithmic}
\end{algorithm}

\begin{theorem}
\label{theo:appfactor}
The approximation factor of the PVG is $6+4 \sqrt{2}$.
\end{theorem}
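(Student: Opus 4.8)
The plan is to bound the social efficiency of the optimal (offline) solution $\text{OPT}$ against the total value of jobs accepted by the PVG greedy allocation. The natural strategy for analyzing a preemptive greedy scheme of this type is an \emph{amortized charging argument}: I would charge the value of every job in the optimal solution to jobs that PVG actually accepts (i.e., jobs remaining in $\mathcal{A}$ at termination), and then show that the total charge received by any single accepted job is at most a constant, namely $6+4\sqrt{2}$. Summing over all accepted jobs then yields $v(\text{OPT}) \le (6+4\sqrt{2}) \cdot v(\mathcal{A})$, which is exactly the claimed approximation factor.

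First I would set up the charging scheme carefully. For each job $I^*$ in the optimal solution, I would trace what PVG did with it. If $I^*$ was never considered favorably, it must have been \emph{rejected} or \emph{preempted} at some point; by the algorithm's rules this can only happen because a set of overlapping jobs $J_1,\dots,J_h$ had total weight at least $v_{I^*}/\beta$ (for rejection) or because a preempting job had weight more than $\beta$ times the overlapping total. In either case I would charge $v_{I^*}$ to the job (or jobs) responsible, splitting the charge in proportion to their weights. The key structural facts I would lean on are (i) the descending sort by per-unit weight $\eta_j = v_j/t_j$, which controls how much value a later-arriving preemptor must carry, and (ii) the transitive "caused" relation defined in the text, which lets me follow a chain of indirect preemptions down to a job that survives in $\mathcal{A}$. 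The crucial point is that once a job $I_j$ preempts others, its own weight exceeds $\beta$ times what it displaced, so the geometric decay of displaced weight along a preemption chain is summable.

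The parameter $\beta$ must then be optimized. I expect the analysis to produce two competing sources of loss: the \emph{direct} charge from optimal jobs that a surviving job blocks without preemption (bounded in terms of $\beta$), and the \emph{chain} charge accumulated from the cascade of preemptions (a geometric series in $1/\beta$, hence bounded by something like $\beta/(\beta-1)$ times a per-link factor). Writing the total charge per accepted job as a function $g(\beta)$ of these two contributions and minimizing over $\beta \ge 1$ should pin down both the optimal threshold and the constant $6+4\sqrt{2}$. I would verify that the minimizing $\beta$ indeed yields this value, likely $\beta = 1+\tfrac{1}{\sqrt{2}}$ or a similarly algebraic choice, by differentiating $g$.

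The main obstacle, I expect, is the \emph{chain-charging} step: because a preempted job can later be re-accepted (Case 3) and because preemptions propagate indirectly, the "caused" relation may branch, and I must ensure that no optimal job's value is charged more than once and that every charge ultimately lands on a surviving accepted job. Controlling the total weight displaced along these branching cascades — so that it forms a convergent geometric sum dominated by the surviving job's own weight — is the delicate part, and it is precisely here that the factor $\beta > 1$ and the backward time-slot allocation rule must be used to rule out pathological re-preemption loops. Getting a clean bound on this cascade, rather than the direct single-step overlap bound, is what drives the constant from a naive value up to $6+4\sqrt{2}$.
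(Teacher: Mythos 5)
Your overall plan coincides with the paper's: a charging argument that maps each optimal job onto accepted jobs (splitting a rejected optimal job across the accepted jobs it overlaps, in proportion to their weights), a recursion over preemption chains that unrolls into a geometric factor $\frac{1}{1-1/\beta}=\frac{\beta}{\beta-1}$, and a final optimization over $\beta$ (the correct minimizer is $\beta=1+\sqrt{2}$, not $1+\tfrac{1}{\sqrt{2}}$, though you hedged on this). However, there is a genuine gap at the step you wave through as ``the direct charge \ldots bounded in terms of $\beta$'': bounding the total weight of optimal jobs that a single accepted job $I$ \emph{directly rejects}. This is where essentially all of the paper's technical work lies, and your proposal supplies no argument for it. The difficulty is that $I$ (weight $v$, length $t$, per-unit weight $\eta=v/t$) can directly reject many optimal jobs over the run of the algorithm; the rejection rule only guarantees that each rejected job's proportional charge to $I$ is at most $\beta v$, so summing ``at most $\beta v$ per rejected job'' over an unbounded number of rejected jobs yields no constant at all.

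The paper closes this hole with a geometric ``separatrix'' argument that you would need to reproduce: the optimal jobs rejected by $I$ are pairwise disjoint in time, and a critical point splits the time axis into a left side (LOS) and a right side (ROS); using that (i) rejected jobs appear later in the sorted order, hence have per-unit weight at most $\eta$, and (ii) time slots for $I$ are packed backward from its arrival time, one shows that on each side at most \emph{one} rejected job need be charged its full weight (at most $\beta v$), while every other rejected job on that side is charged only $\eta_k t_k' \le \eta t_k'$, where $t_k'$ is its overlap with $I$'s allocation on that side and these overlaps sum to at most $I$'s allocated time there. This yields $v'' \le 2\beta v + \eta t = (2\beta+1)v$, hence the recursion $\rho \le \rho/\beta + 2\beta + 2$, i.e.\ $\rho \le 2(\beta+1)/(1-1/\beta)$, minimized at $\beta = 1+\sqrt{2}$ to give $6+4\sqrt{2}$. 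Note also that you placed the main difficulty in the wrong spot: the cascade/chain charge you worry about is the easy part --- the preemption threshold gives $v' \le \rho v/\beta$ by a one-line induction on per-unit weight, since any job preempted by $I$ precedes $I$ in the sorted order, which already rules out the ``pathological re-preemption loops'' you fear --- whereas the direct rejection bound is what determines the per-link factor $2\beta+2$ and hence the constant.
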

\begin{proof}
Let $\mathcal{O} $ be the set of jobs chosen by the optimal mechanism $\mathcal{OPT}$. Let the set of jobs accepted by Algorithm \ref{alg:2} be denoted by $\mathcal{A}$. For each job $I \in \mathcal{A} $, we define a set $R(I)$ of all the jobs in $\mathcal{O}$ that ``accounted for" by $I$. $R(I)$ consists of $I$ if $ I \in \mathcal{O}$, and all the jobs in $\mathcal{O}$ which are rejected or preempted by $I$. More formally:
\begin{enumerate}
\item Assume $I$  is accepted by case 1 or 3, then $R(I)=\{I\}$ in the case of $I \in \mathcal{O}$, and $R(I)=\emptyset$ otherwise.
\item Assume  $I$ is accepted by case 2, then  $R(I)$ is initialized to contain all those jobs from $\mathcal{O}$  that were preempted (directly or indirectly) by $I$. In addition, $R(I)$  contains $I$  in the case of $I \in \mathcal{O}$.
\item Assume $J \in \mathcal{O}$  is rejected by line 20 in Algorithm \ref{alg:2} and let $I_1,...,I_h$ be the jobs in $\mathcal{A}$  that overlapped with $J$  at the time of rejection. We can only allocate each job in the same spectrum in our model. Hence,  $I_1,...,I_h$ and $J$  are allocated in the same spectrum. Let $v$ denote the weight of $J$  and let $v_j$ denote the weight of  $I_j$ for $1 \leq j \leq h $. We view $J$  as $h$  imaginary jobs $J_1 ,..., J_h $, where the weight of $J_i$  is $\frac{v_j v}{\sum_{j=1}^{h} v_j } $  for $1 \leq j \leq h $. $R(I_j):= R(I_j) \cup \{J_j\} $. Note that the weight of $J_j$ is no larger than $\beta$ times the weight of $I_j$ according to the rejection rule.
\end{enumerate}

For each job $J \in \mathcal{O}$, if $J \in \mathcal{A}$, it had been included in $R(J)$  through our acceptance rule; otherwise, it must be preempted or rejected by some $I \in \mathcal{A}$. In this case,  $J$ belongs exactly to the sets $R(I)$  that preempted or rejected by $I$. Thus, the union of all sets  $R(I)$ for $I \in \mathcal{A}$ covers $\mathcal{O}$.

We now fix a job  $I \in \mathcal{A}$. Let  $v$ be the weight of $I$  and let $V$  be the sum of weights of all jobs in $R(I)$. Then, we can get that $V=v'+v''+v$  if $I \in \mathcal{O}$, where $v'$  denotes the weights of all jobs preempted by $I$,  $v''$ is the weights of all jobs or portion of them rejected by $I$; otherwise, $V=v'+v''$. Therefore, we can conclude that $V \leq v'+v''+v$. Define $\rho = V/v$. Our goal is to give the upper bound of $\rho$.

We first consider the jobs that have been rejected by $I$. According to line 20, if $J \in \mathcal{O}$ overlaps with jobs $I_1 ,...,I_h$, we split $J$  into $h$  imaginary jobs $J_1 ,...,J_h$, and let each overlapping job $I_j$ account for an imaginary job $J_j$. Therefore, we can assume that each $J_j$ is only to be rejected by $I_j$. On the other hand, if we remove $I$  from $\mathcal{A}$, all of the jobs or imaginary jobs rejected by $I$ can be accepted by $\mathcal{A}$.

Let $J_1,...,J_q$ be the jobs in $\mathcal{O}$ which were rejected by $I$. If there exists a job $J_k$ ($1 \leq k \leq q$) which can partition $J_1 ,...,J_q$  into two disjoint sections in time-axis, we will define the arrival point of $J_k$ a critical point; otherwise, we will choose the arrival time of job $I$ as the critical point. The whole time-axis will be classified into \emph{LOS} (Left of Separatrix) and \emph{ROS} (Right of Separatrix) by using the critical point as the separatrix. We define the job whose arrival time later than critical point belongs to \emph{ROS}; otherwise, the job belongs to \emph{LOS}.

Assume that the job length of $I$ is $t$, the allocated time in \emph{LOS} and \emph{ROS} last $t'$ and $t''$ respectively. We can easily get that:

\vspace{-9pt}
\begin{small}
\begin{equation*}
t= t'+t''
\tag{5}
\end{equation*}
\end{small}

Assume that there exists two jobs $J_1$  and  $J_2$ located at \emph{ROS} in Fig. \ref{fig:LOS}, and  $a_1$ is the critical point. Since we allocate time slots for job  $J$ starting from its arrival time and searching for continuous available time slots in a backward manner, $J_2$ rejected by $I$  indicates that all the available time from $a_2$  to $d_2$  is less than its job length when $d_1 \leq d_2$ . However,  $J_1$ and  $J_2$ can be accepted while removing $I$  from $\mathcal{A}$ , hence the job length of $J_1$  is equal to the time overlapped with the jobs allocated between $a_1$  and $d$. Assume that the overlap time $J_k$ with job $I$ is $t'_k$.  If all the jobs account for the weight of time overlap with $J_1$ , we can cover the total weight of $J_1$. Therefore, the weight of $I$ should account for $J_1$ is equal to $\eta_1 t'_1 $. It's obvious that $ t'_1 < t''$, thus we can calculate the total weight of all the jobs $I$ should account for in \emph{ROS}:

\begin{small}
\begin{equation*}
V_{ROS} \leq v_2 + \eta_1 t'_1 \leq v_2 + \eta_1 t''
\tag{6}
\end{equation*}
\end{small}

When $d_1 > d_2$ , we demonstrate that $V_{ROS} \leq v_1 + \eta_2 t'_2 \leq v_1 + \eta_2 t''$ similarly.

Assuming there are more than two jobs in $\mathcal{A}$  located in ROS which were rejected by $I$ , we can easily conclude that $V_{ROS} \leq v_i + \sum_{k \ne i} \eta_k t'_k$, and $\sum_{k \ne i}t'_k \leq t'' $.

According to the rejection rule, we can get:

\begin{small}
\begin{equation*}
v_i \leq \beta v
\tag{7}
\end{equation*}
\end{small}
Since the $\eta_k$ from any of the rejected job $J_k$ is less than the value of $\eta$ from $I$, the following holds:

\begin{small}
\begin{equation*}
\sum_{k \ne i} \eta_k t'_k \leq \sum_{k \ne i} \eta t'_k \leq \eta t''
\tag{8}
\end{equation*}
\end{small}

Based on (7) and (8), the sum of weights $V_{\text{ROS}}$  of all the rejected jobs in ROS satisfies:

\begin{small}
\begin{equation*}
V_{ROS}\leq \beta v + t''\eta
\tag{9}
\end{equation*}
\end{small}

By the same method, we can easily find sum of weights  $V_{\text{LOS}}$ of all the rejected jobs in LOS satisfies:
\begin{small}
\begin{equation*}
V_{LOS}\leq \beta v + t'\eta
\tag{10}
\end{equation*}
\end{small}

Combining the above two conditions, sum of weights of all jobs can be calculated:
\begin{small}
\begin{equation*}
v'' = V_{ROS}+V_{LOS} \leq 2\beta v + (t' +t'')\eta=2\beta v +v
\tag{11}
\end{equation*}
\end{small}
\begin{figure}[!t]
\centering
\includegraphics[width=7cm]{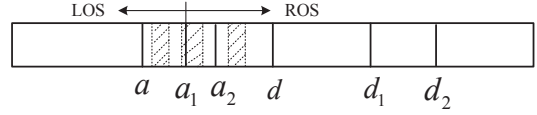}
\caption{The whole time-axis is classified into \emph{LOS} (Left of Separatrix) and \emph{ROS} (Right of Separatrix) by using the critical point as the separatrix. Arrival time $a_1$ of the job $J_1$ is the critical point in this instance. Shadow parts indicate the time slots preempted by $I$.}
\label{fig:LOS}
\end{figure}

We now assume inductively that the $\rho$  bound is valid for jobs with a larger per-unit weight than that of $I$. Since the overall weight of the jobs that directly preempted by $I$ is at most $v/\beta$, we can get $v/\beta *\rho \geq v' $. Recall that $V \leq v' +v'' +v $  and  $v'' \leq 2\beta v + v $ hold. We can obtain that $V \leq v\rho / \beta + 2\beta v + v +v $. This implies that $V/v = \rho \leq \rho / \beta + 2\beta +2 $. The inequality can be depicted as  $\rho \leq \frac{2(\beta +1)}{1- 1/\beta } $ equivalently. This inequality takes its minimal value when  $\beta = 1+\sqrt{2} $, which implies that $\rho \leq 6+ 4\sqrt{2} $. Finally, since the $\rho$  bound holds for all the jobs in $\mathcal{A}$  and the union of all $R(I)$  sets covers all the jobs taken by $\mathcal{OPT}$, we can conclude that the $EFF(\mathcal{OPT})$ is at most $\rho$  times the social efficiency $EFF(\mathcal{A})$. Therefore, the approximation factor is $6+4 \sqrt{2} $.
\end{proof}

\subsection{Payment Calculation}

After getting the set of accepted jobs by Algorithm \ref{alg:2}, we will calculate the payment for each winner. An auction is value-SP if and only if it is bid monotone and always charges the winners its critical value. Therefore, we use the \emph{binary search} to find the critical value for each job in $\mathcal{A}$.

Let $p_j$ denote the payment of job  $I_j$, and $p'_j$ denote the critical value of $I_j\in\mathcal{A}$ calculated through \emph{binary search}. Since the payment for winner $I_j$ should be no less than $\eta_s t_j$,  $p'_j$ satisfies:

\begin{small}
\begin{equation*}
\eta_s t_j \leq p'_j \leq v_j
\tag{12}
\end{equation*}
\end{small}

We charge $p_j ={p}'_j $ for each winner and $p_j =0$ for each loser.
According to the payment rule, we can easily get that:

\begin{lemma}
\label{lemma:pvgcriticalvalue}
If $I_j$ wins, its payment $p_j $ is a critical value in the PVG auction.
\end{lemma}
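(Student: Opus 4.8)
The plan is to reduce the lemma to two facts: (i) the PVG allocation rule of Algorithm \ref{alg:2} is \emph{bid monotone} in the declared value $f_j$ (with the length $t_j$ held fixed), and (ii) the binary search that defines $p'_j$ returns exactly the threshold separating winning bids from losing bids. Given (i) and (ii) the conclusion is immediate: by definition a critical value of $I_j$ is the bid at which its winning status flips; monotonicity guarantees a single crossover point, so the binary search over $[\eta_s t_j, v_j]$ converges to that threshold $p'_j$; and since the payment rule sets $p_j = p'_j$, the charged payment is precisely the critical value. Thus the real work is establishing monotonicity, after which the lemma follows from the definition of $p'_j$.

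First I would state and prove a monotonicity claim for PVG analogous to Lemma \ref{lemma:monotone}: if $I_j$ is accepted by Algorithm \ref{alg:2} when it declares $f_j$, then it is also accepted when it declares any $f'_j \ge f_j$. The subtlety here, absent from the VCG case, is that raising $f_j$ raises the per-unit weight $\eta_j = f_j / t_j$, so $I_j$ moves earlier in the sorted processing order and the entire execution of the greedy procedure may change. I would handle this by coupling the two runs of Algorithm \ref{alg:2} --- one with $f_j$ and one with $f'_j$ --- and arguing that the higher bid can only help $I_j$: it is considered no later; the weight a competitor must exceed in order to preempt $I_j$ grows with $I_j$'s own weight through the $\beta$-preemption rule; and $I_j$'s own weight is more likely to clear the preemption threshold against the jobs it overlaps. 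The goal is to show, by induction over the processing order, that whenever $I_j$ holds a slot in the $f_j$-run it also holds a (possibly different) slot in the $f'_j$-run.

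The main obstacle is precisely this coupling argument, because Algorithm \ref{alg:2} is not a simple ``accept if it fits'' greedy: it admits preemption (Case 2) and deferred re-acceptance (Case 3), so a local change in one job's rank can cascade through later acceptances and rejections. I would isolate the cascade by maintaining, as the induction invariant, that at the moment $I_j$ is (re)considered the total weight of overlapping committed jobs that $I_j$ must overcome in the $f'_j$-run is no larger than the corresponding quantity in the $f_j$-run; monotonicity of acceptance then follows because both the direct-acceptance predicate and the $\beta$-preemption predicate are monotone in $I_j$'s weight. Once bid monotonicity is in hand, part (ii) is routine: the winning region $\{f_j : I_j \text{ wins}\}$ is an up-set whose left endpoint, clipped by the reservation-price constraint $\eta_s t_j \le f_j$, is exactly what the binary search locates, namely $p'_j$ with $\eta_s t_j \le p'_j \le v_j$. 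Setting $p_j = p'_j$ therefore charges the critical value, completing the proof.
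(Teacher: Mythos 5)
Your proposal is correct in substance but follows a genuinely different decomposition than the paper. The paper's own proof of this lemma is essentially definitional: the payment rule \emph{defines} $p'_j$ as the critical value located by binary search over $[\eta_s t_j , v_j]$ and sets $p_j = p'_j$, so the lemma is asserted to follow ``according to the payment rule''; the bid monotonicity that makes this binary search meaningful is proved separately \emph{afterward}, as Lemma \ref{lemma:pvgmonotone} (via a two-case contradiction argument), and the two facts are combined only in Theorem \ref{theo:pvgvaluesp}. You instead fold the monotonicity argument into this lemma, on the correct observation that without monotonicity the winning region need not be an up-set and the binary search has no well-defined threshold to converge to; this addresses a real gap in the paper's ordering, since as written Lemma \ref{lemma:pvgcriticalvalue} silently presupposes the Lemma \ref{lemma:pvgmonotone} that follows it. The trade-off: the paper's route is modular and makes this lemma immediate, localizing the hard work in one place, while yours is logically self-contained at the point of use but forces you to carry out the coupling argument for Algorithm \ref{alg:2} yourself. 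One caution there: your proposed induction invariant---that the overlapping committed weight facing $I_j$ at each (re)consideration is no larger in the $f'_j$-run---is stronger than anything the paper establishes; with preemption (Case 2) and deferred re-acceptance (Case 3) the committed sets of the two runs are not pointwise comparable, so that invariant may be difficult to push through. The paper's own proof of Lemma \ref{lemma:pvgmonotone} sidesteps this with a looser contradiction argument (a job that preempts or blocks $I_j$ at the higher bid would equally do so at the lower bid), so your plan is more demanding than what the paper itself attempts, even though its high-level skeleton---monotone up-set plus binary-search threshold equals critical value---matches the paper's overall analysis in Section \ref{sec4}.
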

\subsection{\textbf{Theoretical Analysis of the PVG Auction}}

Similar to the analysis of the VCG-based optimal auction, we first prove the most important economic property strategyproofness of the PVG auction which requires both \emph{value-SP} and \emph{time-SP}.

To prove the \emph{value-SP} of the PVG auction, we should first prove that the allocation resulting from the PVG auction is bid monotone.

\begin{lemma}
\label{lemma:pvgmonotone}
The allocation resulting from the PVG auction is bid monotone.
\end{lemma}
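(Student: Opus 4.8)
The plan is to track the two effects that raising $I_j$'s bid from $f_j$ to $f'_j \ge f_j$ has on Algorithm \ref{alg:2}. First, it enlarges the weight $v_j$ that enters every comparison in the algorithm; second, it enlarges the per-unit weight $\eta_j = v_j/t_j$, which can only push $I_j$ to an earlier position in the descending sorted order. Both effects intuitively help $I_j$, and the goal is to turn this intuition into an invariant that survives the preemption and re-acceptance dynamics. I would reduce the claim to two elementary moves, since ``$I_j$ still wins'' composes along a chain: (i) increases of $v_j$ small enough that the relative $\eta$-order of all jobs is unchanged, and (ii) single reorderings in which $I_j$ overtakes one adjacent job $I_k$ whose $\eta_k$ it has just reached (ties at the crossing value broken in favor of the higher original bidder $I_j$). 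Any increase from $f_j$ to $f'_j$ is a finite composition of moves of these two types, so it suffices to show each one preserves $I_j \in \mathcal{A}$ at termination.

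For a move of type (i) I would couple the run at weight $v_j$ with the run at weight $v_j + \epsilon$ using the identical processing order, and prove by induction over this order a ``dominance'' invariant: whenever $I_j$ has been accepted, it occupies the same time slots in both runs, and every later job is weakly less able to dislodge it in the heavier run. Concretely, the preemption test (line~13) evicts $I_j$ only when some later job $I$ finds that the total weight of the jobs overlapping it --- a set that now contains the larger $v_j$ --- falls below $w_I/\beta$; increasing $v_j$ raises this overlapping total and therefore makes the threshold strictly harder to meet. Hence any later job that failed to preempt $I_j$ at weight $v_j$ still fails at $v_j+\epsilon$, and since re-acceptance (Case~3, lines~16--21) only adds jobs to $\mathcal{A}$ and never removes $I_j$, the job $I_j$ survives to termination.

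For a move of type (ii) the two runs are identical up to the pair $\{I_j,I_k\}$, so all predecessors common to both are processed identically and $I_j$ is examined against a state $\mathcal{A}$ that no longer reflects $I_k$'s allocation. Removing $I_k$ from the jobs placed before $I_j$ can only shrink the set of jobs overlapping $I_j$ at the moment it is considered, which makes acceptance in Case~1 and preemption in Case~2 at least as easy; thus if $I_j$ entered $\mathcal{A}$ before the swap it still enters after. Once $I_j$ is accepted, $I_k$ is now one of its successors, and the same survival argument as in move~(i) applies, because $I_k$ can only reject or preempt $I_j$ under the very threshold that already failed when $I_j$ won at the lower bid.

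The main obstacle is that Cases~2 and~3 let the two executions' trajectories diverge globally rather than differ only at $I_j$, so the naive coupling breaks down; the delicate part is controlling indirect preemption chains, in which $I_j$'s larger weight causes it to preempt a different job, which then changes which jobs later succeed or fail, potentially feeding back onto $I_j$. I would contain this by establishing that the only mechanism removing an already-accepted job from $\mathcal{A}$ is the preemption step (line~14), that the weight threshold required to evict $I_j$ is monotonically nonincreasing as $v_j$ grows, and that no ripple introduced by Case~3 ever deletes a job. Proving this invariant is robust under the reshuffling caused by chained preemptions is where the real work lies; once it is in hand, composing moves~(i) and~(ii) establishes bid monotonicity.
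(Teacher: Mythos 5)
Your proposal is a proof plan whose hardest step is explicitly left open, and that step is precisely the content of the lemma. You correctly reduce an arbitrary bid increase to (i) order-preserving increases and (ii) adjacent swaps in the $\eta$-order, and you correctly observe that the naive coupling of the two runs breaks down because Cases 2 and 3 of Algorithm \ref{alg:2} let the executions diverge globally. But the dominance invariant you propose for move (i) --- ``$I_j$ occupies the same slots in both runs and later jobs are weakly less able to dislodge it'' --- is not strong enough to close the induction, and you do not supply what would be. The problem is not only that a later job finds it harder to preempt a set containing the heavier $I_j$; it is that this very fact changes membership of $\mathcal{A}$ downstream: a job $I$ that successfully preempted in the light run may fail in the heavy run, so $I$ stays out of (or in) $\mathcal{A}$ differently, its victims are or are not re-accepted under Case 3, and a subsequent job then faces \emph{different} overlap sets in the two runs. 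At that point the pointwise comparison ``the overlapping total is larger in the heavy run'' is no longer meaningful, because the overlapping jobs themselves differ. Your own closing sentence concedes that establishing robustness of the invariant under these chained preemptions ``is where the real work lies''; since that work is absent, the proposal does not constitute a proof.

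For comparison, the paper avoids coupling entire trajectories. It assumes $I$ wins with bid $f$ but loses with $f' > f$, and does a case analysis on \emph{how} the higher-bid run fails: either $I$ was accepted (no-overlap case) and later preempted by some job $J$, in which case $J$ would equally be accepted when $I$ bids the lower $f$ (since lowering $I$'s bid only weakens it and moves it later in the order), leaving no room for $I$; or $I$ was never accepted because it overlaps jobs already in $\mathcal{A}$, which again still block it at the lower bid. Either way $I$ loses with bid $f$, a contradiction. This argument reasons about the blocking structure of a single failed run rather than trying to align two runs step by step, which is exactly how it escapes the divergence problem you ran into. (The paper's own argument is itself terse about cascading preemptions, but it is the reference proof, and its outcome-level contradiction framing is the idea your plan is missing.)
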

\begin{proof}
Supposing job $I$ wins by bidding $f$ in the PVG auction, we increase $I$'s bid from $f$ to ${f}'$ with ${f}'>f$. Assume $j$ is the rank of $I$ by
bidding $f$ in $\mathcal{I}$, and $i$ is the new rank of $I$ by bidding ${f}'$. Then we can get that $\eta =\frac{f}{t}\le \frac{{f}'}{t}={\eta }'$,
$i\leq j$.

Suppose $I$ loses in the auction by bidding ${f}'$, there are two possible cases:

Case 1: The job $I$ is accepted through line 6, but preempted by job $J$, This means $I$ overlaps with $J$. According to our preemption rule, $J$ also can be accepted when $I$ bids truthfully. There is no enough time available for $I$ after $J$ is accepted. Thus, $I$ cannot be accepted by bidding $f$ either. If $J$ is also be preempted by another job, and if $I$ is still not be accepted by bidding $f'$ according to line 17, then it will not be accepted by bidding $f$ either.

Case 2: $I$ has never been accepted. This
means $I$ overlaps with one or more jobs which have been accepted in $\mathcal{A}$.
In this case, $I$ cannot be accepted by bidding $f$ either.

According to the above analysis, if $I$ wins by bidding $f$ in the PVG
auction, it always wins by bidding ${f}'>f$. Therefore, the allocation
resulting from the PVG auction is bid monotone.

The lemma holds.
\end{proof}

We will give the truthful demonstration of the proposed PVG auction mechanism through Theorem \ref{theo:pvgvaluesp} and Theorem \ref{theo:pvg-time}. With Lemma \ref{lemma:pvgcriticalvalue} and Lemma \ref{lemma:pvgmonotone}, we obtain:

\begin{theorem}
\label{theo:pvgvaluesp}
The PVG auction is value-SP.
\end{theorem}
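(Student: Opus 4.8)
The plan is to derive the theorem as an immediate consequence of the two preceding lemmas together with the characterization of truthful single-parameter auctions cited from \cite{nisan2007algorithmic}: a bid-monotone allocation rule that always charges each winner its critical value yields a value-SP (strategyproof) auction. My strategy is therefore to verify that the PVG auction satisfies both hypotheses and then invoke this characterization, exactly mirroring the way Theorem~\ref{theo:vcg} was obtained from Lemma~\ref{lemma:critical-value} and Lemma~\ref{lemma:monotone}.

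First I would note that Lemma~\ref{lemma:pvgmonotone} guarantees the allocation produced by Algorithm~\ref{alg:2} is bid monotone: if job $I_j$ wins with bid $f_j$, it continues to win with any $f_j' \ge f_j$. Consequently, holding $\mathbf{f}_{-j}$ fixed, the set of bids for which $I_j$ wins is an upper interval, so a well-defined critical value $c_j$ exists --- the threshold above which $I_j$ wins and below which it loses. Second, Lemma~\ref{lemma:pvgcriticalvalue} guarantees that whenever $I_j$ wins, its payment $p_j$ equals exactly this critical value $c_j$, independent of $v_j$.

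Then I would close the argument with the standard threshold reasoning that the cited characterization encapsulates, distinguishing two cases. If $v_j \ge c_j$, truthful bidding wins and yields utility $v_j - c_j \ge 0$; any overbid $f_j \ge c_j$ wins at the same payment $c_j$ and hence the same utility, while any underbid $f_j < c_j$ loses and yields $0 \le v_j - c_j$. If $v_j < c_j$, truthful bidding loses with utility $0$, whereas any winning deviation $f_j \ge c_j$ forces payment $c_j > v_j$ and thus negative utility. In every case $u_j(v_j,\mathbf{f}_{-j}) \ge u_j(f_j,\mathbf{f}_{-j})$, which is precisely inequality (2) defining value-SP.

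There is essentially no hard obstacle remaining at this stage: the genuine technical content --- establishing monotonicity against the preemption chains of Algorithm~\ref{alg:2} (Lemma~\ref{lemma:pvgmonotone}) and confining the critical value to $[\eta_s t_j, v_j]$ via binary search (Lemma~\ref{lemma:pvgcriticalvalue}) --- has already been discharged. The only point requiring mild care is that the critical value is independent of $I_j$'s own bid, so that overbidding cannot lower the payment below $c_j$; this is exactly what makes the payment constant across the win region and is what Lemma~\ref{lemma:pvgcriticalvalue} supplies.
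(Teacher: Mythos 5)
Your proposal matches the paper's own derivation exactly: the paper obtains Theorem~\ref{theo:pvgvaluesp} directly by combining Lemma~\ref{lemma:pvgcriticalvalue} and Lemma~\ref{lemma:pvgmonotone} with the characterization from \cite{nisan2007algorithmic} that a bid-monotone auction charging critical values is truthful. Your additional two-case threshold argument merely spells out what that cited characterization encapsulates, so the proof is correct and takes essentially the same route.
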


Then we show that the PVG auction scheme is time-SP for secondary users.
\begin{theorem}
\label{theo:pvg-time}
The PVG auction is time-SP.
\end{theorem}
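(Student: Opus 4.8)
The plan is to mirror the structure of the \emph{value-SP} argument, exploiting the same realistic restriction used in the proof of Theorem~\ref{theo:vcg-time}: a job can only \emph{overstate} its required duration, since declaring $t'_j < t_j$ leaves it with fewer slots than it needs to complete its task and is therefore useless. Thus the only meaningful deviation is $t'_j \ge t_j$ with the true valuation $v_j$ held fixed. Under this restriction I would reduce time-SP to a single monotonicity property of the allocation in the time coordinate, and then recover the payment behaviour exactly as bid monotonicity (Lemma~\ref{lemma:pvgmonotone}) together with the critical-value rule (Lemma~\ref{lemma:pvgcriticalvalue}) yielded value-SP (Theorem~\ref{theo:pvgvaluesp}).

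First I would establish a \emph{time monotonicity} statement: holding $v_j$ and the profile $\mathbf{f}_{-j}$ of all competing jobs fixed, if $I_j$ wins when it declares the longer duration $t'_j$, then it also wins when it declares the true, shorter duration $t_j$. The intuition is that shrinking $t_j$ helps $I_j$ in two independent ways. Since the greedy sorting key is $\eta_j = v_j/t_j$, a smaller $t_j$ strictly raises $\eta_j$ and moves $I_j$ \emph{earlier} in the processing order, which by the same case analysis as Lemma~\ref{lemma:pvgmonotone} can only help it be accepted. Simultaneously, a smaller $t_j$ means $I_j$ needs fewer available slots, so it is easier to place in Case~1/Case~3, and in Case~2 the total weight of the jobs it must preempt is no larger, so the preemption test ($<v_j/\beta$) is easier to pass. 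I would formalize this by tracking $I_j$ through the three acceptance cases and arguing that every step that succeeds for $t'_j$ still succeeds for the smaller $t_j$.

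With time monotonicity in hand, the payment follows cleanly. Let $W(v,t)$ be the indicator that a job of value $v$ and duration $t$ wins against the fixed profile $\mathbf{f}_{-j}$; bid monotonicity makes $W$ non-decreasing in $v$, and the step above makes it non-increasing in $t$. The critical value located by the binary search is $p_j(t)=\inf\{v:\,W(v,t)=1\}$, and since the set of winning bids can only shrink as $t$ grows, $p_j(t)$ is non-decreasing in $t$; in particular $p_j(t'_j)\ge p_j(t_j)$ for every over-report $t'_j\ge t_j$, while the lower clamp $\eta_s t_j \le p'_j$ only reinforces this. The utility analysis then splits into the two usual cases: if $I_j$ loses when truthful then, by time monotonicity, it also loses for every $t'_j\ge t_j$, so its utility stays $0$; and if $I_j$ wins when truthful then any winning over-report yields utility $v_j-p_j(t'_j)\le v_j-p_j(t_j)$, so lying gives no improvement. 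Hence truthful duration reporting is dominant and the PVG auction is time-SP.

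The main obstacle is the time-monotonicity step, not the payment bookkeeping. Because Algorithm~\ref{alg:2} interleaves preemption (Case~2) with later reconsideration (Case~3), shrinking $t_j$ perturbs not only $I_j$'s own placement but the entire downstream chain of preemptions it can trigger; I therefore expect the delicate part to be showing that no preemption cascade available to the longer declaration $t'_j$ is lost when $I_j$ instead occupies fewer slots, so that the implication ``wins with $t'_j$ implies wins with $t_j$'' survives the reconsideration loop. This is the time-dimension analogue of the case analysis already carried out for Lemma~\ref{lemma:pvgmonotone}, and I would model the argument closely on it.
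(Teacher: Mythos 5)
Your proposal follows essentially the same route as the paper's own proof: restrict attention to over-reported durations $t'_j \ge t_j$, observe that the true (shorter) duration yields a larger per-unit bid $\eta_j$ and requires fewer slots, so that winning with $t'_j$ implies winning with $t_j$ by the same case analysis as Lemma~\ref{lemma:pvgmonotone}, and then conclude from the fact that the critical-value payment is non-decreasing in the reported duration. Your writeup is in fact somewhat more careful than the paper's (you formalize the payment monotonicity via $p_j(t)=\inf\{v: W(v,t)=1\}$ and explicitly flag the preemption/reconsideration cascade as the delicate point, which the paper passes over with ``similar to the proof of Lemma~\ref{lemma:pvgmonotone}''), but the decomposition and key steps are the same.
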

\begin{proof}
We also assume the job $I_j \in I$ can only claim a longer job length $t'_j$ than its actual requirement in the \emph{PVG} auction. If $I_j$ wins by bidding a fake job length $t'_j > t_j$, we can get that:

\begin{small}
\begin{equation*}
\eta_j = \frac{f_j}{t_j} \leq \frac{f_j}{t'_j}=\eta'_j
\tag{13}
\end{equation*}
\end{small}

Similar to the proof of Lemma \ref{lemma:pvgmonotone}, we can easily obtain that $I_j$ also wins by bidding $t_j$. Since the payment of $I_j$ is time-independent or increased with $t_j$, since the utility of $I_j$ will not be increased when it lies. However, $I_j$ may lose by bidding $t'_j > t_j$, while winning by bidding truthfully. In this case, the utility of $I_j$ will be decreased after it lies.

So Theorem holds.
\end{proof}

Based on Theorem \ref{theo:pvgvaluesp} and Theorem \ref{theo:pvg-time}, we have proved that the PVG auction is also truthful for secondary user.

At last, we discuss the time complexity of the PVG auction. Assume $\xi$ denotes the minimum bid size, and $V_{max} = \mathop{\max}\limits_{I_j \in \mathcal{A}}(v_j - \eta_s t_j)$. Therefore, we have:

\begin{theorem}
\label{theo:complexity}
The time complexity of the PVG auction is $O(MN^2 logP)$, where $M$ is the number of channels, $N$ is the number of jobs, and $P=V_{max}/\xi $.
\end{theorem}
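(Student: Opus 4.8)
The plan is to bound the running time by analyzing each of the three nested structures in Algorithm~\ref{alg:2}: the outer loop over jobs, the inner scans over channels, and the binary-search step used for payment computation. I would first account for the sorting of jobs by per-unit weight $\eta_j$, which costs $O(N \log N)$ and is dominated by later terms. The main body of the allocation algorithm loops over all $N$ jobs. For each job $I_j$, the algorithm scans all $M$ channels (lines 4--10 for Case~1 and lines 12--24 for Case~2), and within a channel it must test feasibility against the currently accepted jobs in $\mathcal{A}$. The key observation to make precise is that checking whether $I_j$ can be placed (possibly by preempting overlapping jobs), and identifying the overlapping set $\{J_1,\dots,J_n\}$, takes $O(N)$ time per channel, since $|\mathcal{A}| = O(N)$ and there are $O(N)$ segmented time slots to examine.

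First I would establish the cost of a \emph{single} run of the allocation algorithm (Algorithm~\ref{alg:2}). The outer loop runs $N$ times; the channel scan runs $M$ times; and the reconsideration step (lines 16--22), triggered after a Case~2 preemption, re-examines up to $j \le N$ previously rejected jobs, each at $O(1)$ or $O(N)$ feasibility cost. The dominant term comes from the fact that for each of the $N$ jobs we may touch all $M$ channels and, inside a preemption, re-scan $O(N)$ prior jobs against the segmented slots. A careful bookkeeping shows one full allocation run costs $O(MN^2)$. The segmentation of the time axis produces $O(N)$ slots per channel (each job contributes two cut points), so feasibility tests stay within this budget and do not inflate the bound.

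Next I would incorporate the payment phase. By Lemma~\ref{lemma:pvgcriticalvalue} and the payment rule, each winner's payment is its critical value, found by \emph{binary search} over the interval $[\eta_s t_j,\, v_j]$ as in~(12). The width of the search space is $V_{max} = \max_{I_j \in \mathcal{A}}(v_j - \eta_s t_j)$, and with minimum bid granularity $\xi$ the binary search terminates in $O(\log(V_{max}/\xi)) = O(\log P)$ iterations. Crucially, \emph{each} iteration re-runs the allocation algorithm to test whether $I_j$ still wins at the probed bid, so a single winner's payment costs $O(MN^2 \log P)$. Since computing payments for all winners can reuse a shared binary search over the common bid structure — or, in the worst case, is performed per winner but still bounded within the same asymptotic envelope — the overall complexity is $O(MN^2 \log P)$.

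\textbf{The hard part} will be justifying that the reconsideration loop (lines 16--22) and the cascade of indirect preemptions do not blow up the per-run cost beyond $O(MN^2)$: one must argue that although preemptions can trigger re-acceptance of earlier jobs, the total work charged across all iterations of the outer loop remains quadratic in $N$ rather than exponential. I would handle this by an amortized argument, charging each re-examination of a previously rejected job to the channel-job pair that triggered it, and observing that each such pair is processed a bounded number of times. Once this per-run bound of $O(MN^2)$ is secured, multiplying by the $O(\log P)$ binary-search factor for payments yields the claimed $O(MN^2 \log P)$, completing the proof.
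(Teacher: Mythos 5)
The paper states Theorem~\ref{theo:complexity} with no proof at all (it appears immediately after the definition of $V_{max}$ and $\xi$ and is followed directly by the simulation section), so there is no authorial argument to compare against; your attempt must stand on its own merits, and it has a genuine gap in the payment phase. You bound a single run of Algorithm~\ref{alg:2} by $O(MN^2)$, and you correctly observe that each winner's critical value is found by binary search over $[\eta_s t_j, v_j]$ in $O(\log(V_{max}/\xi)) = O(\log P)$ probes, each probe re-running the allocation. But there can be $\Theta(N)$ winners, and critical values are job-specific, so the searches cannot be shared: a probe for $I_j$ fixes $\mathbf{f}_{-j}$ and varies only $I_j$'s bid, which is a different computation for each $j$. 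Multiplying your own bounds gives $N \cdot O(\log P) \cdot O(MN^2) = O(MN^3 \log P)$, a factor of $N$ worse than the claimed $O(MN^2 \log P)$. The sentence in which you appeal to ``a shared binary search over the common bid structure'' or assert that per-winner computation is ``bounded within the same asymptotic envelope'' is exactly the missing step, and as stated it is wishful arithmetic rather than an argument. To reach the theorem's bound along your route, a single allocation run would have to cost $O(MN)$ --- i.e., each job-channel feasibility test charged at $O(1)$ --- which contradicts your own (reasonable) observation that each test costs $O(N)$ because of the $O(N)$ segmented time slots.

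The per-run bound is also weaker than you claim. The reconsideration loop (lines 16--21 of Algorithm~\ref{alg:2}) examines up to $j$ earlier jobs, each with an $O(N)$ slot-scan feasibility test on channel $s_i$; since a preemption can occur for every job $j$, this contributes $O(N^2)$ per preempting job and hence $O(N^3)$ over a full run, not $O(MN^2)$ unless $N = O(M)$. The amortization you defer to the ``hard part'' --- charging re-examinations to channel-job pairs --- controls the \emph{number} of re-examinations but does not absorb the $O(N)$ cost of each slot scan, so it does not close this hole as sketched. In short, both quantities you need (cost per allocation run, and total number of runs across all winners' binary searches) are asserted rather than established, and even taking your assertions at face value their product overshoots the stated bound by a factor of $N$.
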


\section{Simulation Results}\label{sec5}
In this section, we conduct extensive simulations to examine the performance of the proposed auctions.

\subsection{Simulation Setup}

In order to make the experimental results more convincing and close to reality, we adopt the data set based on analysis of measurement data, which is collected in Guangdong Province, China. We choose the frequency band of \emph{Broadcasting TV1}(48.5 - 92MHz) for comparison from many frequency band of services, and intercept continuous 5 days' record from the whole measurement data. Total bandwidth of \emph{TV1} is split into plenty of channels in accordance with the width of 0.2MHz. For each channel, the data are divided into massive time slots, and we roughly set each time slot about 75 seconds. As a result, total number of time slots reaches to 5760 (5days/75s).

Fig. \ref{fig:tv1spec} shows a depiction of the channel vacancy located in frequency band of \emph{TV1}. We use black color to represent the occupied time slots and white color to denote the white space for each channel. The comparison figure makes some characteristics of spectrum usage easier to visualize, for instance, we can easily find that the usage time of primary user is basically the same in each day. Therefore, the vacancy time slots in all 5 days are selected as the idle slots for auction to ensure the usage of primary user at the same time.

In our simulations, we select 3 channels from the whole frequency band of \emph{TV1} as input, and the total time of each channel lasts 24 hours from 0:00 to 24:00.
We generate all the bid values from secondary users with a reservation price and the requirement of job length for each secondary user is uniformly distributed in the range of [0.5,2] hours. The request time interval with arrival time and deadline for each secondary user is uniformly distributed in the range of [2,4] hours. $\lambda$ shows the number of requests in our setting, Here, we generate two different scenarios.

\begin{itemize}
  \item \textbf{\emph{Set 1:}} All the requests are uniformly distributed in 24 hours without hot time.\\
  \item \textbf{\emph{Set 2:}} There exists hot time in this setting, which contains about $\delta$ requests of the whole day. In our simulation setting, we set $\delta=80\% $.
\end{itemize}

\begin{figure}[!t]
\centering
\includegraphics[width=8cm]{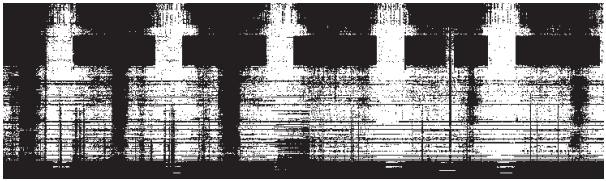}
\caption{Usage of spectrum for 5 days, an instance of frequency band of \emph{Broadcasting TV1}. }
\label{fig:tv1spec}
\end{figure}

\subsection{Performance of the Auction Mechanisms}

In this section, we study the performance of the PVG mechanism compared with the VCG-based optimal mechanism. We mainly focus on the performance of social efficiency and total revenue for primary user. For comparison, we plot results for 2 different request sets mentioned above, and analyze influences of the relationship between supply and demand from the results.

Fig. \ref{fig:socialefficiencyratio}(a) illustrates the social efficiency ratio of the PVG auction and the VCG-based optimal auction. We see that the PVG auction works as well as the VCG-based optimal auction when $\lambda$ is small. This is because there is enough available time for each secondary user, and most of them can be allocated without overlapping with others in both schemes. The competition among secondary users increases with $\lambda$, the VCG-based optimal auction outperforms the PVG auction gradually. The social efficiency ratio keeps approximately stable when $\lambda$ is large enough, where the supply is much less than demand. From Figs. \ref{fig:socialefficiencyratio}(a) and \ref{fig:spectrumutilizationratio}(b), we can see that the PVG auction performs best in the lightly loaded system and worst in highly loaded system of set 2. However, even in the worst case, the social efficiency ratio of the PVG auction is still above 70\% of the VCG-based optimal auction.


\begin{figure}[!t]
\centering
\begin{tabular}{cc}
\includegraphics[width=0.24\textwidth]{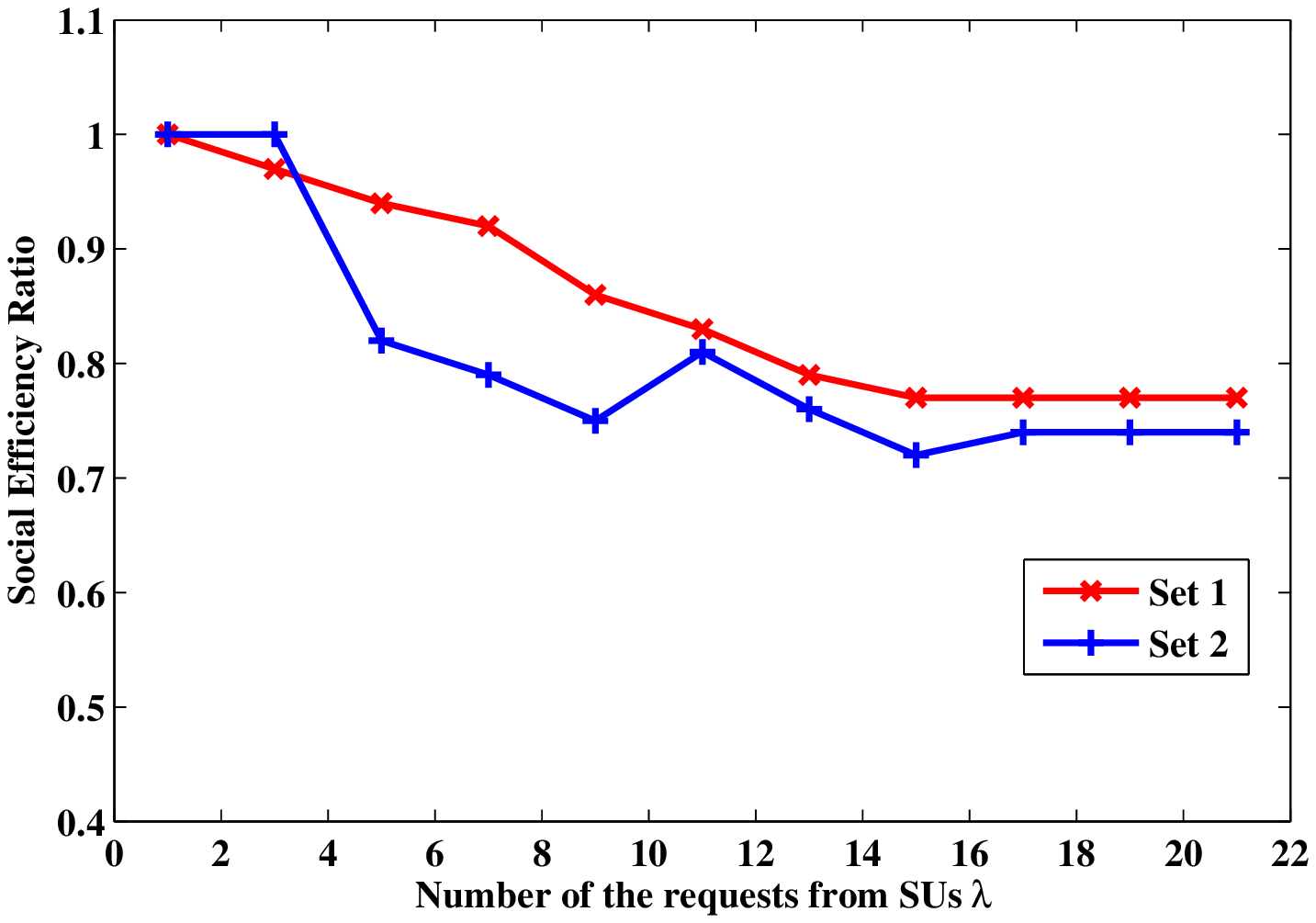} &
\includegraphics[width=0.24\textwidth]{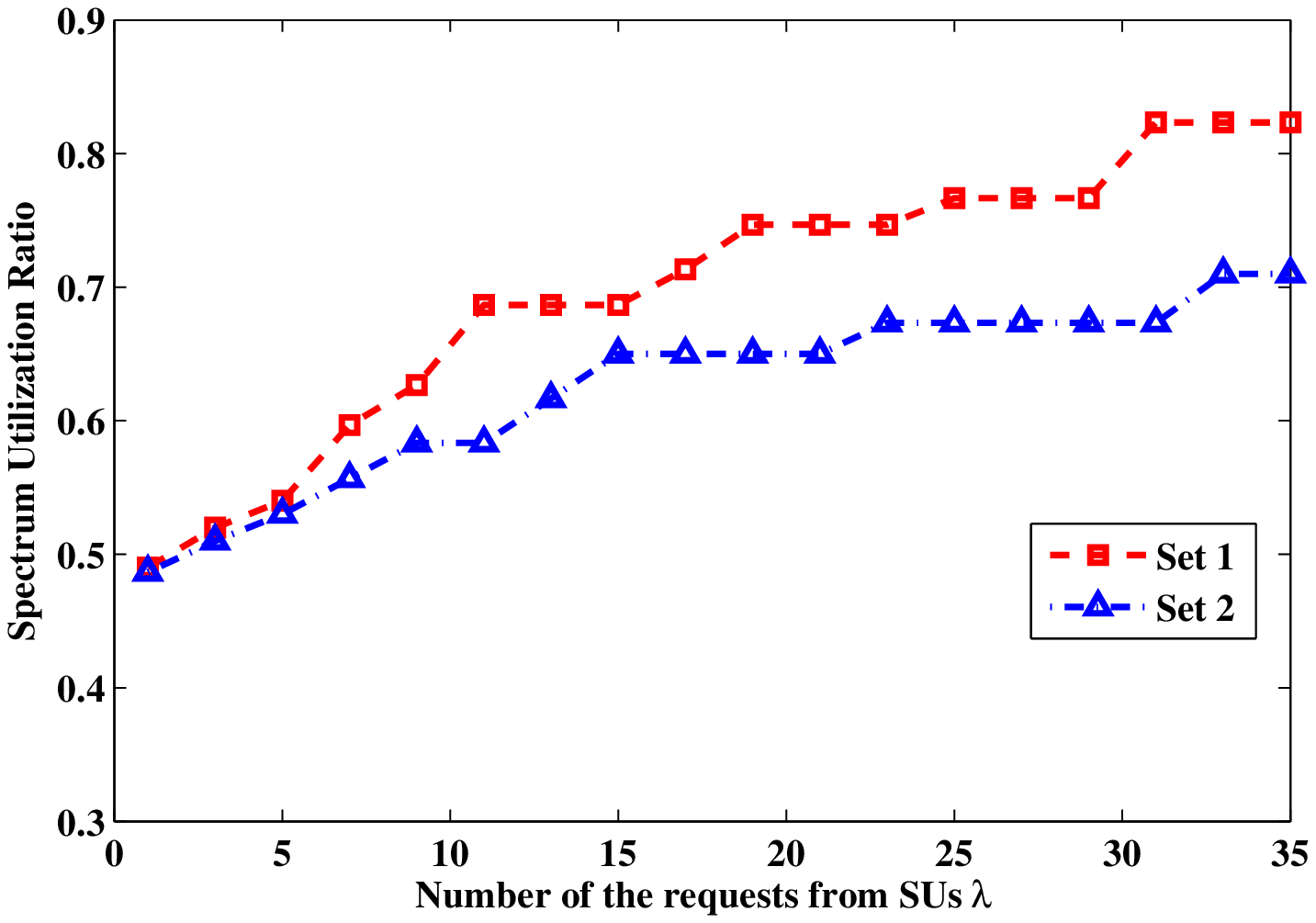}\\
(a) {\small Social Efficiency Ratio} &
(b) {\small Spectrum Utilization Ratio}
\end{tabular}
\caption{Social efficiency ratio and spectrum utilization ratio under Set 1 and 2, $\eta_s = 0$, $\beta=2$.}
\label{fig:socialefficiencyratio}
\label{fig:spectrumutilizationratio}
\end{figure}

\begin{figure}[!t]
\centering
\begin{tabular}{cc}
\includegraphics[width=0.24\textwidth]{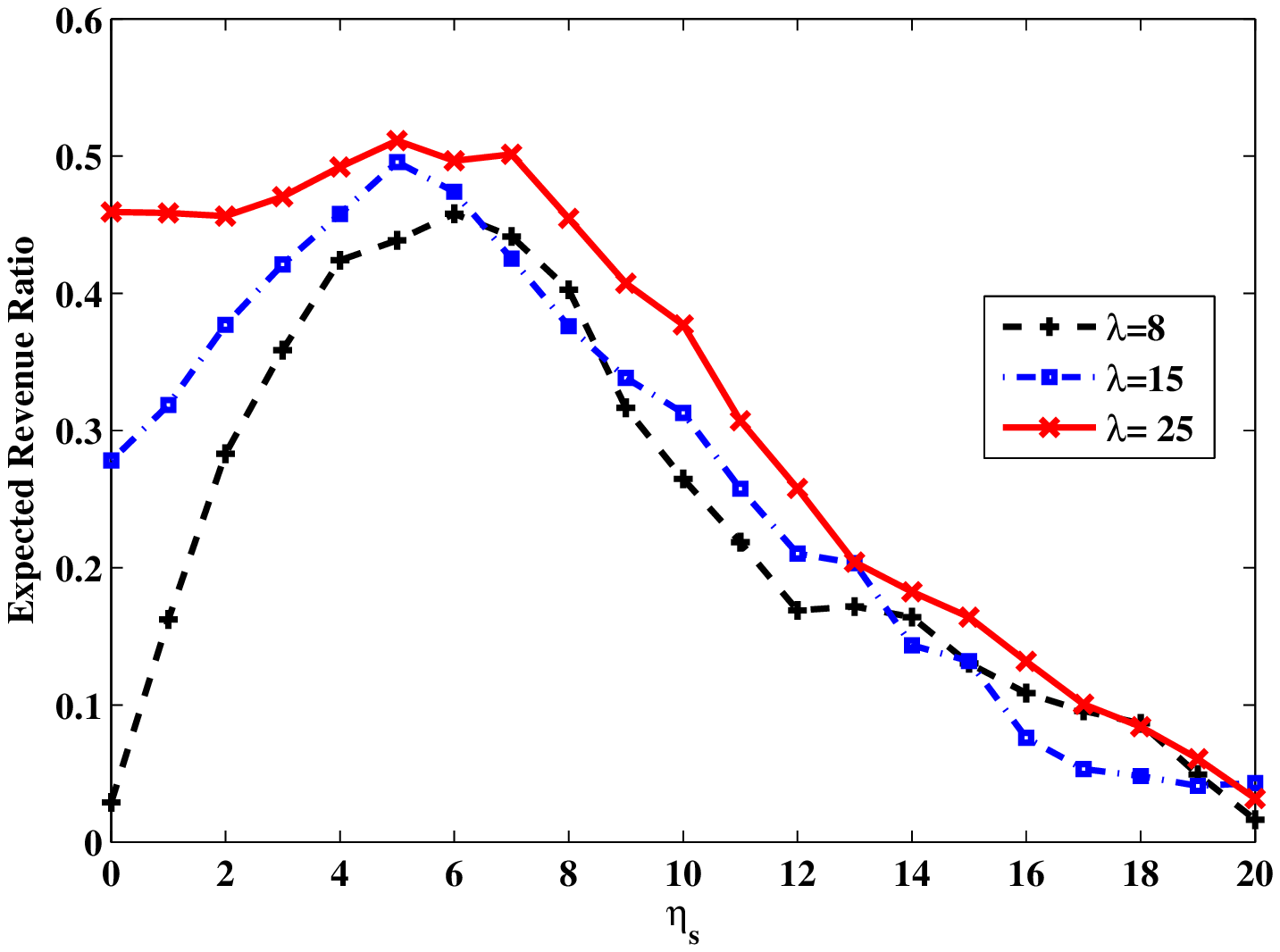} &
\includegraphics[width=0.24\textwidth]{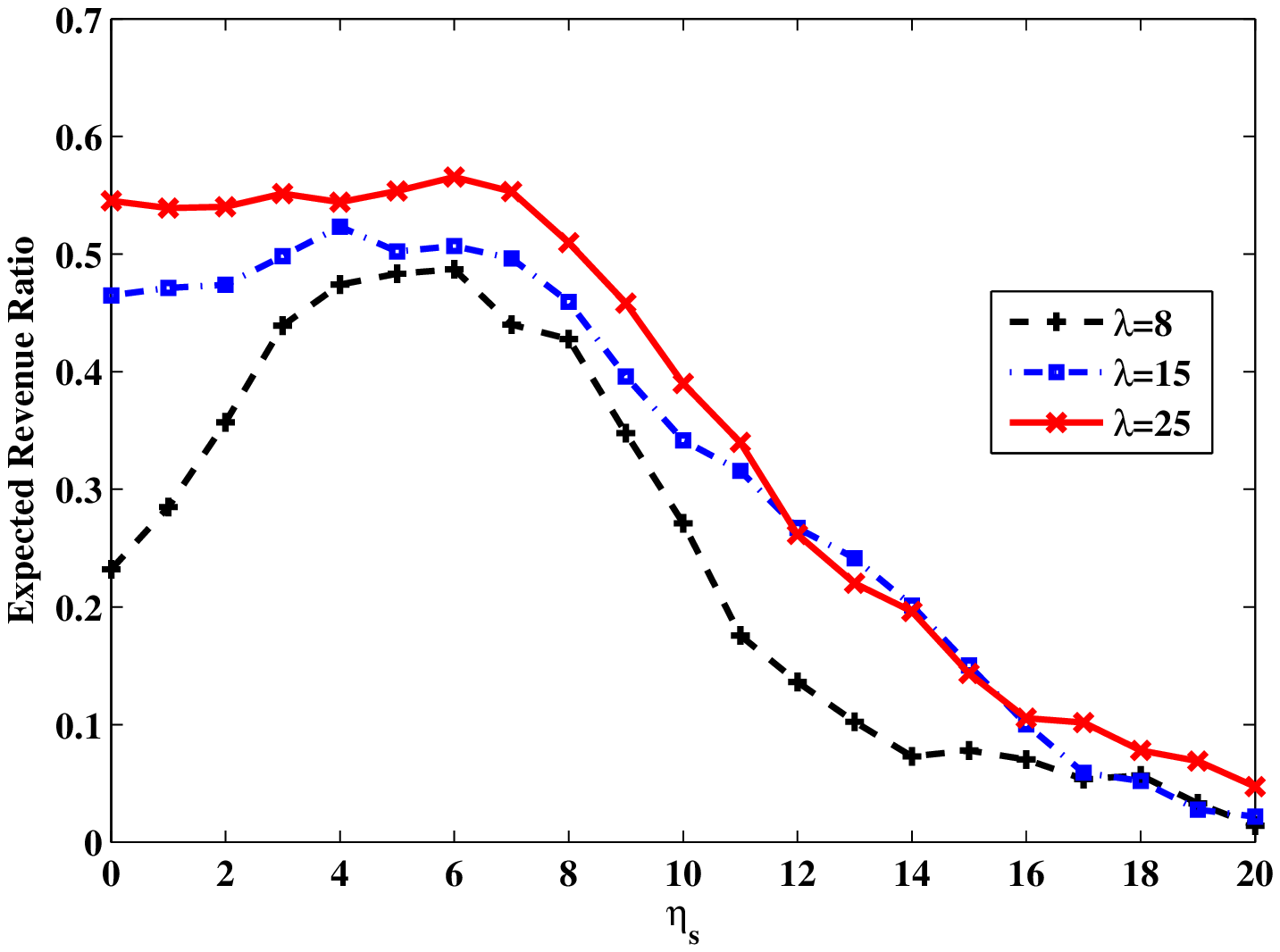} \\
(a) set 1 & (b) set 2
\end{tabular}
\caption{Expected revenue ratio under different data sets, $\beta=2$.}
\label{fig:revenuetv1}
\label{fig:revenuetv1hot}
\end{figure}


In Fig. \ref{fig:revenuetv1}(a), we plot the relationship between $\eta_s$ and the expected revenue ratio for primary user in Set 1, where the expected revenue ratio is the total payment of all the winning secondary users compared with the social efficiency when $\eta_s=0$. When the number of requests is small (8 or 15 in Fig. \ref{fig:revenuetv1}(a)), the competition is weak, some of the requests from secondary users can be allocated in channels without overlapping with others. Thus, the payment of these secondary users is equal to the product of $\eta_s$ and the request time. In this case, the total revenue for primary user increases along with $\eta_s$. However, a request cannot be allocated in channels if its per-unit bid is smaller than $\eta_s$. Thus, many secondary users may lose in the auction due to their bids are smaller than the product of $\eta_s$ and their request time. The revenue of primary user will decreases with the value of $\eta_s$ when $\eta_s$ is set too high. Most of the requests from the wining secondary users overlap with at least one request from a losing secondary user, when there are plenty of requests from secondary users in the spectrum market ($\lambda=25$). The revenue of primary user will decrease with the value of $\eta_s$ in this case. This is more obvious in Set 2. Due to the fierce competition among requests in hot time, the revenue of primary user doesn't increase by setting a higher $\eta_s$ even when $\lambda=15$.
We can make some reasonable hypothesis based on the analysis of experimental results from Fig. \ref{fig:revenuetv1}(a) and Fig. \ref{fig:revenuetv1hot}(b). The primary user can improve its revenue by setting a suitable $\eta_s$ under the condition that supply exceeds demand. The revenue of primary user can be maximized through the competition of secondary users in the condition that demand exceeds supply, the revenue will decrease with a large $\eta_s$.

\section{Literature Reviews}\label{sec6}
Auction serves as an efficient way to distribute scarce resources in a market and it has been extensively studied in the scope of spectrum allocation in recent years. 
Many works follow on the designs of wireless spectrum auctions in different scenarios. 
For instance, \cite{gandhi2007general} and \cite{ryan2006new} study the spectrum band auctions aiming to minimize the spectrum interference. 

Truthfulness is a critical factor to attract participation \cite{klemperer2002really}. 
Many well-known truthful auction mechanisms are designed to achieve economical robustness \cite{babaioff2001concurrent},\cite{mcafee1992dominant},\cite{vickrey1961counterspeculation}.
Unfortunately, none of the earlier spectrum auctions address the problem of truthfulness.
Truthfulness is first designed for spectrum auction in \cite{zhou2008ebay}, where spectrum reuse is considered. 
Similar model is adopted by the following works: \cite{zhou2009trust}, \cite{xu2010secondary}, \cite{wang2010toda}, \cite{wangdistrict}, \cite{huangsprite}, \cite{jia2009revenue}, \cite{al2011truthful}, \cite{gopinathanstrategyproof}, \cite{wu2011small}, \cite{wang2010toda}, \etc Specifically, \cite{jia2009revenue} and \cite{al2011truthful} focus on designing truthful mechanisms for maximizing revenue for the auctioneer; \cite{gopinathanstrategyproof} chooses the classic max-min fairness criterion in the study of the fairness issue in spectrum allocations to achieve global fairness; \cite{wu2011small} supports spectrum reservation prices in the auction model. TODA \cite{wang2010toda} first takes time domain into account, and proposes a truthful suboptimal auction with polynomial time complexity aiming to generate maximum revenue for the auctioneer. District mechanism \cite{wangdistrict} first takes the spectrum locality into account and gives an economically robust and computationally efficient method. Different from traditional periodic auction model, many works study the spectrum allocation in an online model \cite{wang2010toda},\cite{deek2011preempt},\cite{xu2011efficient}. However, most existing works concentrate on a truthful mechanism design without considering spectrums as non-identical items. The proposed optimal and sub-optimal spectrum auction mechanisms take the inherent spectrum heterogeneous characteristics into consideration in this paper.

Heterogeneous spectrum transaction issue has been studied in \cite{fengtahes},\cite{parzy2011non} and \cite{kash2011enabling}. In \cite{fengtahes}, Feng \emph{et al.} propose a truthful double auction method for heterogeneous spectrum allocation. \cite{parzy2011non} and \cite{kash2011enabling} solve the heterogeneous auction problems in different perspectives. However, they do not consider time domain issue in their works, thus making the spectrum allocation incomplete.

\section{Conclusion}\label{sec7}
In this paper, we  studied a general truthful secondary spectrum auction framework of heterogeneous spectrum allocation.
We  designed two auction mechanisms to maximize the social efficiency. One is optimal design for social efficiency by using the classic VCG mechanism, but it has high complexity. The proposed \emph{PVG} auction scheme has a constant approximation factor but is computationally much more efficient. These auctions provide primary users sufficient incentive to share their spectrum and make dynamic spectrum access more practical.
To the best of our knowledge, this is the first work that takes both the spectrum heterogeneity and a flexible time request from secondary users into consideration.

Several interesting questions are left for future research.
The first one is to study the case when the request of a secondary user may be served by several channels in a single local market.
The second one is when a secondary user's requests may cover multiple spectrum markets.
We need to investigate the impact of crossing dependence of different markets.
The third challenging question is to design truthful mechanisms when we have to make online decisions.

\section*{Acknowledgement}
We wish to thank the Prof. Shufang Li, Dr. Sixin Yin from BUPT for providing the spectrum usage data and Dr. Juncheng Jia for his insightful suggestions. We gratefully acknowledge the support from the National Grand Fundamental Research 973 Program of China (No.2011CB302905).

\ifCLASSOPTIONcaptionsoff
  \newpage
\fi

\bibliography{auction}

\end{document}